\newtheorem{theorem}{Theorem}%[section]
\newtheorem{lemma}{Lemma}
\newtheorem{assumption}{Assumption}
\newtheorem*{claim*}{Claim}
\newtheorem*{remark*}{Remark}
\newtheorem*{definition*}{Definition}
\NewDocumentCommand{\lplabel}{o m}{%
  \makebox[0pt][r]{#2\hspace*{4em}}%
  \IfNoValueF{#1}
    {\def\@currentlabel{#2}\ltx@label{#1}}
}
\renewcommand\section{%
  \@startsection{section}{1}
                {\z@}%
                {-3.5ex \@plus -1ex \@minus -.2ex}%
                {2.3ex \@plus.2ex}%
                {\large\bfseries}% 11pt
}
\renewcommand\subsection{%
  \@startsection{subsection}{2}
                {\z@}%
                {-3.25ex\@plus -1ex \@minus -.2ex}%
                {1sp}% No space after subsections
                {\normalsize\bfseries}% normal size, boldface
}
\renewcommand\subsubsection{%
  \@startsection{subsubsection}{3}
                {\z@}%
                {-3.25ex\@plus -1ex \@minus -.2ex}%
                {1sp}% No space after subsubsections
                {\normalfont\normalsize}% normal size, medium
}
\def\IEEElabelanchoreqn#1{\bgroup
\def\@currentlabel{\p@equation\theequation}\relax
\def\@currentHref{\@IEEEtheHrefequation}\label{#1}\relax
\Hy@raisedlink{\hyper@anchorstart{\@currentHref}}\relax
\Hy@raisedlink{\hyper@anchorend}\egroup}
\newcommand{\subnumberinglabel}[1]{\IEEEyesnumber
  \IEEEyessubnumber*\IEEElabelanchoreqn{#1}}
\title{{\Large\bf  On the Approximate Core and Nucleon of\\ Flow Games with Public Arcs}\thanks{Supported in part by the National Natural Science Foundation of China (Nos.\,12001507, 11871442 and 12171444) and Natural Science Foundation of Shandong Province (No.\,ZR2020QA024).
}}
\author{Pengfei Liu, Han Xiao\thanks{Corresponding author. Email: hxiao@ouc.edu.cn.}, Tianhang Lu, Qizhi Fang}
\affil{School of Mathematical Sciences\\Ocean University of China\\Qingdao, China}
\date{\today}
\begin{document}

%\numberwithin{equation}{section}

\maketitle

\openup 1.2\jot

%\hfill

\begin{abstract}
We investigate flow games featuring both private arcs owned by individual players and public arcs accessible cost-free to all coalitions. 
We explore two solution concepts within this framework: the approximate core and the nucleon.
The approximate core relaxes core requirements by permitting a bounded relative payoff deviation for every coalition,
and the nucleon is a multiplicative analogue of Schmeidler's nucleolus which lexicographically maximizes the vector consisting of relative payoff deviations for every coalition arranged in a non-decreasing order.
By leveraging a decomposition property for paths and cycles in a flow network, we derive complete characterizations for the approximate core and demonstrate that the nucleon can be computed in polynomial time.
\hfill

\noindent\textbf{Keywords:} Flow game $\cdot$ Approximate core $\cdot$ Nucleon $\cdot$ Partially disjoint paths 

\noindent\textbf{Mathematics Subject Classification:}  05C57 $\cdot$ 91A12 $\cdot$ 91A43 $\cdot$ 91A46
% 05C57 Combinatorics - Games on graphs
% 91A12 Game Theory - Cooperative games
% 91A43 Game Theory - Games involving graphs
% 91A46 Game Theory - Combinatorial games

%\noindent\textbf{JEL Classifcation: } C71
% JEL: C70 - General
% JEL: C71 - Cooperative games
% JEL: C72 - Noncooperative games
% JEL: C73 - Stochastic and Dynamic games; Evolutionary games; Repeated Games
% JEL: C78 - Bargaining theory; Matching theory
% JEL: C79 - Other
\hfill

\end{abstract}

\section{Introduction}

Cooperative game theory provides a powerful framework for understanding how individuals can strategically collaborate to achieve collective benefits.
Cooperations in network-based scenarios are often studied through flow games, where commodities or services are commonly perceived as flows within a network.
%Flow games are usually used to study cooperations in network-based scenarios \cite{GG92,KZ82a,KZ82b}, where commodities or services are typically represented as flows within a network.
Flow games typically assume that each arc is owned by a single player \cite{GG92,KZ82a,KZ82b}.
Players collaborate strategically to maximize their individual revenues by  forming coalitions and routing flows along the network.
However, this assumption falls short of capturing the ubiquitous presence of shared resources, public goods, and common infrastructure in real-world networks.
Individual control and shared access often collaborate in various scenario.
For example, transportation networks with public roads and private tollways, or communication networks with private and shared exchange points for internet service providers.
To address this limitation, flow games with public arcs were introduced \cite{PRB06,RMPT96}.
%In this generalized model, public arcs are not owned by any individual player but are accessible to all players without any cost.
As a generalized model, there are both private and public arcs, where private arcs are owned by individual players and public arcs are accessible to all players at no cost.
This extension enables a more realistic representation of scenarios where players have access to shared resources or infrastructure.
However, it also introduces new difficulties, prompting us to re-examine fundamental concepts within the framework of cooperative game theory. 

In cooperative game theory, the core is a central solution concept for analyzing the stability of cooperation.
It is the set of allocations where no coalition has an incentive to deviate from the grand coalition and operate independently.
However, the requirements of the core can be overly demanding, resulting in an empty core in many games \cite{BB20,Vazi22,XF23}.
Furthermore, even when the core is non-empty, finding a core allocation can be computationally challenging \cite{FZCD02}.
These limitations motivate the exploration of alternative solution concepts that relax strict requirements of the core.
Such ideas have been explored in early works \cite{Bond63, SS66}.
One such relaxation is the approximate core, which recognizes that full coalition stability is rare in practice and that minor compromises are often necessary.
It offers computational advantages over core existence checks, and, quantifies a game's proximity to admitting a core. 
More precisely, the approximate core allows coalitions to receive slightly less than their full potential, accepting deviations within a prescribed bound.
This flexibility reflects real-word factors like the costs of forming new coalitions, uncertainty about potential gains, or the willingness to settle for a “good enough” outcome.
The approximate core has two primary variants: the multiplicative one and additive one, which differ in how they measure deviations.
The multiplicative approximate core considers deviations relative to the full potential of every coalition.
It allows a coalition to receive slightly less than its full potential, but the shortfall is proportional to its worth.
The additive approximate core uses a fixed deviation tolerance for every coalition.
It allows every coalition to fall short of its full potential by up to a fixed amount.
When deviations shrink towards zero, both of them converge towards the core (if it exists), providing a smooth transition between the approximate core and the core.
The multiplicative and additive approximate cores have both been studied extensively.
The multiplicative one has been studied in \cite{FK93,FK98,GS04,Vazi22},
while the additive approximate core has been analyzed in \cite{AM85,BB20,MPS79,Schm69,XF23}, often serving as an intermediary concept in studying Schmeidler's nucleolus \cite{Schm69}.
The nucleolus is a key solution concept in cooperative game theory, which is the unique allocation that lexicographically maximizes the vectors consisting of absolute deviations for each coalition arranged in a non-decreasing order.
It always lies within the smallest non-empty additive approximate core.
In contrast to the nucleolus, the nucleon consists of allocations that lexicographically maximize the vectors consisting of proportional deviations for each coalition arranged in a non-decreasing order.
It was introduced by Faigle et al. \cite{FKFH98} and has been investigated in matching games \cite{FKFH98} and flow games without public arcs \cite{KP09}.
Analogous to the nucleolus, the nucleon is closely related to the multiplicative approximate core, providing another perspective on the balance between fairness and stability.
Note that the multiplicative approximate core is more sensitive to the worth of a coalition, while the additive version treats all coalitions equally.
The choice among these concepts depends on the specific situations and the desired fairness criteria.

In this work, we study the multiplicative approximate core and the nucleon in the flow game with public arcs.
For brevity, we will refer to the multiplicative approximate core as the approximate core in our following discussion.
Previous research on this model has investigated the existence and characterization of the core \cite{RMPT96} as well as the computation of the nucleolus \cite{PRB06}.
By exploiting properties of the maximum partially disjoint path problem, we examine the existence and characterization of the approximate core and show that the nucleon can be computed in polynomial time.
%Prior to our work, the core existence of this model have been investigated in \cite{RMPT96} and the nucleolus computation has been studied in \cite{PRB06}.
Our results complete the research line of flow games with public arcs.

The remainder of this paper is organized as follows.
Section \ref{sec:preliminaries} introduce basic notions and notations.
Besides, Section \ref{sec:preliminaries} introduces a auxiliary game model to study the approximate core and nucleon.
Section \ref{sec:core.auxiliary} studies the core of the auxiliary flow game.
Section \ref{sec:nucleon.auxiliary} studies the nucleon computation of the auxiliary flow game.
Section \ref{sec:main.results} presents the main results of this paper.

\section{Preliminaries}
\label{sec:preliminaries}

\subsection{Cooperative game theory}

A \emph{cooperative revenue game} $\Gamma=(N,\gamma)$ consists of a \emph{player se}t $N$ and a \emph{characteristic function} $\gamma:2^{N}\rightarrow \mathbb{R}$ with $\gamma(\emptyset)=0$.
We call $N$ the \emph{grand coalition} and call every subset $S\subseteq N$ a \emph{coalition}.
For any $\boldsymbol{x}\in \mathbb{R}^{N}$ and $S\subseteq N$, we use $x(S)$ to denote $\sum_{i\in S}x_{i}$.
An \emph{allocation} of $\Gamma$ is a vector $ \boldsymbol{x}\in \mathbb{R}^{N}_{\geq 0}$ with $x(N)=\gamma (N)$.
We use $\chi(\Gamma)$ to represent the set of allocations of $\Gamma$. 
The relative excess of an allocation is key concept in defining both the approximate core and the nucleon.
Given an allocation $\boldsymbol{x}\in \chi(\Gamma)$, its \emph{relative excess} for coalition $S\subseteq N$, denoted by $\text{re}_{\boldsymbol{x}}(S)$, is defined by $\frac{x(S)-\gamma (S)}{\gamma(S)}$ when $\gamma(S)>0$ and by $+\infty$ when $\gamma(S)=0$.

Now we define the approximate core.
The $\epsilon$-\emph{approximate core} of $\Gamma$, denoted by $\mathcal{C}_{\epsilon}(\Gamma)$, is the set of allocations whose relative excess is no less than $\epsilon$ for any $S\subseteq N$,
i.e., $\mathcal{C}_{\epsilon}(\Gamma)=\{\boldsymbol{x}\in \chi(\Gamma): x(S)\geq (1+\epsilon) \gamma (S), \forall S\subseteq N\}$.
Let $\epsilon^*=\max\{\epsilon: \mathcal{C}_{\epsilon}(\Gamma)\neq \emptyset\}$.
We call $\mathcal{C}_{\epsilon^*}(\Gamma)$ the \emph{optimal approximate core} of $\Gamma$.
By definition, $\epsilon^{*}\leq 0$.
When $\epsilon^*=0$, $\mathcal{C}_{\epsilon^*}(\Gamma)$ is the \emph{core} of $\Gamma$ and denoted by $\mathcal{C}(\Gamma)$, i.e.,
$\mathcal{C}(\Gamma)=\{\boldsymbol{x}\in \chi(\Gamma): x(S)\geq \gamma (S), \forall S\subseteq N\}$.

Next we define the nucleon.
Given an allocation  $\boldsymbol{x}\in \chi(\Gamma)$, its \emph{excess vector} $\theta (\boldsymbol{x})$ is a $(2^{\lvert N\rvert}-2)$-dimensional vector of all relative excesses $\text{re}_{\boldsymbol{x}}(S)$ arranged in a non-decreasing order, where $S\in 2^N\backslash \{\emptyset,N\}$.
The \emph{nucleon} of $\Gamma$, denoted by  $\eta (\Gamma)$, is the set of allocations that lexicographically maximize $ \theta(\boldsymbol{x}) $ over $\chi(\Gamma)$, i.e., 
$ \eta(\Gamma)=\{\boldsymbol{x}\in \chi(\Gamma):\theta(\boldsymbol{x})\succeq_{l}\theta(\boldsymbol{x}'), \forall \boldsymbol{x}'\in \chi(\Gamma)\}$.
Faigle et al. \cite{FKFH98} show that $\eta (\Gamma)$ can be computed by recursively solving the following sequential LPs in \eqref{eq:nucleon.lp} where $k\geq 0$. 
\begin{maxi!}|s|
  {}{\epsilon \label{eq:nucleon.lp.obj}}
  {\label{eq:nucleon.lp}}{LP_{k+1}:\quad}
  \addConstraint {x(S)}{\geq (1+\epsilon) \gamma (S),}{\quad \forall S\in 2^N\backslash  \textnormal{fix}(X_{k})}{\label{eq:nucleon.lp.c1}}
  \addConstraint {\boldsymbol{x}}{\in X_k .}{}{\label{eq:nucleon.lp.c2}}
\end{maxi!}
Initially, $\epsilon_0 =0$, $X_0=\chi (\Gamma)$, and $\text{fix}(X_0)=\{\emptyset,N\}$.
After solving $LP_{k+1}$, let $\epsilon_{k+1}$ be the optimum of $LP_{k+1}$, $X_{k+1}$ be the set of allocations $\boldsymbol{x}_{k+1}$ such that $(\boldsymbol{x}_{k+1},\epsilon_{k+1})$ is an optimal solution to $LP_{k+1}$, and $\textnormal{fix}(X_{k+1})=\{S\subseteq N: x(S)=x_{k+1}(S), \forall \boldsymbol{x} \in X_{k+1}\}$ be the set of \emph{fixed coalitions} in $X_{k+1}$.
Note that $\epsilon_{k+1} > \epsilon_k$ and $X_{k+1}\subseteq X_k$. 
When $X_{k+1}$ becomes a singleton or $\gamma (S)=0$ for $S\in 2^N\backslash \textnormal{fix}(X_{k+1})$, the recursion ends and $X_{k+1}=\eta (\Gamma)$.
Since the dimension of $X_{k+1}$ is less than that of $X_k$, it takes up to $\lvert N\rvert$ rounds before $X_{k+1}$ converges to the nucleon $\eta (\Gamma)$.
The main difficulty on the computation of $\eta (\Gamma)$ lies in determining $X_{k+1}$, $\textnormal{fix}(X_{k+1})$ and verifying exponential number of constraints in $LP_{k+1}$.

\subsection{Flow game}

Let $D=(V,E;s,t)$ be a \emph{network with unit arc capacities}, where $V$ is the \emph{vertex set}, $E$ is the \emph{arc set}, $s$ is the \emph{source} and $t$ is the \emph{sink}.
Parallel arcs are allowed in $D$.
We use $\Gamma_D=(N,\gamma)$ to denote the \emph{flow game with public arcs} defined on $D=(V,E;s,t)$.
Each player in $N$ possesses a distinct \emph{private arc} in $E$.
For brevity, we can view the player set $N$ as a subset of the arc set $E$.
The arcs in $M=E\backslash N$ are called \emph{public arcs}, which are not owned by any specific player and can be used for free by any coalition.
For each coalition $S\subseteq N$, $\gamma (S)$ equals the maximum flow value in the induced network $D_S =(V,S\cup M;s,t)$.

Given that $D=(V,E;s,t)$ is a network with unit arc capacities, we only consider integer-valued flows without circulations.
If a flow contains a circulation, we can always diminish the circulation without affecting the value of the flow.
According to the flow decomposition theorem \cite{Schr03}, we define a \emph{maximum flow} of $D$ as a maximum set of arc disjoint $s$-$t$ paths in $D$.
A \emph{path} is a set of arcs joining a sequence of distinct vertices along the same direction.
A path is a \emph{subpath} of another path if it is a contiguous subsequence of the latter.
We use $\mathscr{P}$ to denote the family of all $s$-$t$ paths in $D$.
A \emph{minimum $s$-$t$ cut} of $D$ is a minimum arc set that intersect every $s$-$t$ path in $D$.

In the study of flow games, the following assumptions are commonly made \cite{KP09,RMPT96}.
Assumption \ref{assumption:1} guarantees that the game $\Gamma_D=(N,\gamma)$ is \emph{normalized}, i.e., $\gamma(\emptyset)=0$.
Assumption \ref{assumption:2} removes redundant arcs in the network $D$.
% It follows from Assumption \ref{assumption:2} that there is neither in-coming arc for the source $s$ nor out-going arc for the sink $t$.

\begin{assumption}\label{assumption:1}
 Every $s$-$t$ path of $D$ contains an arc from $N$.
\end{assumption}
\begin{assumption}\label{assumption:2}
  Every arc of $D$ belongs to an $s$-$t$ path.
\end{assumption}

For the flow game $\Gamma_D$ with player set $N=E$,
Kalai and Zemel \cite{KZ82a,KZ82b} show that the core is always non-empty for every subgame and provide structural characterizations for the core;
Moreover,
Kern and Paulusma \cite{KP09} give an efficiently algorithm for computing the nucleon.
For the flow game $\Gamma_D$ with player set $N\subseteq E$, Reijnierse et al. \cite{RMPT96} study the existence and structural characterizations of the core;
And we will settle the characterization of the approximate core and the computation of the nucleon in this paper.

We study flow games with public arcs by resorting to the \emph{maximum partially disjoint path problem}.
Let $F\subseteq E$ be an arc subset of $D$.
Two paths of $D$ are \emph{disjoint on $F$} if they have no common arc in $F$.
We use $\sigma_F$ to denote the maximum number of $s$-$t$ paths disjoint on $F$.
Clearly, $\sigma_F \geq \sigma_E $, since paths disjoints on $E$ are naturally disjoint on $F$.
A cut of $D$ is \emph{constrained to $F$} if it is a subset of $F$.
A min-max relation for partially disjoint paths and constrained cuts can be derived from Menger's theorem.

\begin{lemma}[Schrijver \cite{Schr03}]\label{thm:partial.disjoint.min-max}
  For any $F\subseteq E$,
  the maximum number of $s$-$t$ paths disjoint on $F$ is equal to the minimum size of an $s$-$t$ cut constrained to $F$.
\end{lemma}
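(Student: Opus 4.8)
The plan is to derive the statement from the arc-disjoint version of Menger's theorem (equivalently, the integral max-flow min-cut theorem) by passing to an auxiliary capacitated network. First I would dispose of the degenerate case: if no $s$-$t$ cut of $D$ is contained in $F$, then there is no $s$-$t$ cut constrained to $F$, so the right-hand side is $+\infty$; in that situation some $s$-$t$ path avoids $F$ entirely, or more generally arbitrarily large flows can be routed while loading each arc of $F$ at most once, so the left-hand side is also $+\infty$. Hence assume $F$ contains some $s$-$t$ cut $C$. Then any family of $s$-$t$ paths pairwise disjoint on $F$ must traverse distinct arcs of $C$, so it has at most $|C|\le|F|\le|E|$ members; in particular both quantities in the lemma are finite and bounded by $|E|$.

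Next, build the network $D'$ on the same vertices and arcs as $D$, giving capacity $1$ to every arc in $F$ and capacity $|E|+1$ to every arc in $E\setminus F$. I claim the maximum value of an integral $s$-$t$ flow in $D'$ equals the maximum number of $s$-$t$ paths disjoint on $F$. For one direction, an integral $s$-$t$ flow of value $k$ loads each arc of $F$ with $0$ or $1$ unit, so after cancelling circulations the flow decomposition theorem \cite{Schr03} writes it as $k$ unit $s$-$t$ paths (discarding cycles), and since each arc of $F$ carries total flow at most $1$ it lies on at most one of these paths; thus the $k$ paths are pairwise disjoint on $F$. Conversely, given $k$ such paths, the sum of their arc-indicator vectors is a feasible integral $s$-$t$ flow in $D'$ of value $k$, because each arc of $F$ is used at most once and each arc of $E\setminus F$ at most $k\le|E|<|E|+1$ times.

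For the cut side, observe that the minimum cut capacity in $D'$ is at most $|C|\le|E|<|E|+1$, so no minimum $s$-$t$ cut of $D'$ can contain an arc of $E\setminus F$; therefore a minimum cut of $D'$ is an $s$-$t$ cut of $D$ constrained to $F$, and its capacity is exactly its size. Hence the minimum cut capacity of $D'$ equals the minimum size of an $s$-$t$ cut constrained to $F$. Applying the integral max-flow min-cut theorem to $D'$ and combining with the previous paragraph gives the desired equality.

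I do not anticipate a genuine obstacle: this is a routine specialization of Menger's theorem, and the only delicate points are choosing the capacity on the arcs of $E\setminus F$ large enough to keep them out of every minimum cut yet small enough to preserve integrality of flows (the value $|E|+1$ works precisely because we first bounded the answer by $|E|$), and separating off the unbounded case. If one prefers to avoid capacities altogether, an equivalent combinatorial route is to replace each arc of $E\setminus F$ by $|E|$ parallel copies and apply the plain arc-disjoint Menger theorem to the resulting multigraph: paths disjoint on $F$ in $D$ correspond to genuinely arc-disjoint paths there, while any $s$-$t$ cut using an arc of $E\setminus F$ must include all $|E|$ of its copies and so cannot be minimum.
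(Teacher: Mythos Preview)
Your argument is correct. The paper does not supply its own proof of this lemma: it simply attributes the result to Schrijver and remarks that it ``can be derived from Menger's theorem,'' which is exactly what you carry out via the large-capacity auxiliary network (or equivalently the parallel-copies multigraph). So your proposal fills in precisely the derivation the paper only gestures at, and there is nothing to compare beyond that.
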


Throughout this paper, we pick an arbitrary maximum set $\mathscr{B}$ of $s$-$t$ paths disjoint on $N$ to characterize all $s$-$t$ paths and all cycle in $D$.
A $u$-$v$ \emph{jump} is a $u$-$v$ path where both $u$ and $v$ lie on the paths from $\mathscr{B}$ but no intermedia vertices belong to paths from $\mathscr{B}$.
Every $u$-$v$ jump determines a \emph{jump pair} $(u,v)$.
We use $\mathscr{J}$ and $J$ to denote the set of all jumps and all jump pairs over $\mathscr{B}$, respectively.
Notice that $J$ can be determined by solving $O(\lvert V\rvert)^2$ shortest path problems.

\begin{lemma}\label{thm:jump.composition}
  Every $s$-$t$ path and every cycle is a composition of subpaths of paths from $\mathscr{B}$ and jumps from $\mathscr{J}$.
\end{lemma}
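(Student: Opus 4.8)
The plan is to read the decomposition straight off the way an $s$-$t$ path, or a cycle, meets the fixed family $\mathscr{B}$. Let $U$ denote the set of vertices of $D$ lying on some path of $\mathscr{B}$. Since every path of $\mathscr{B}$ goes from $s$ to $t$, and $\mathscr{B}\neq\emptyset$ (by Assumption~\ref{assumption:2} every arc lies on an $s$-$t$ path, so $\sigma_N\geq 1$), we have $s,t\in U$; moreover, because $s$ has in-degree $0$ and $t$ out-degree $0$ once Assumption~\ref{assumption:2} holds, no cycle of $D$ passes through $s$ or $t$. The maximality of $\mathscr{B}$ is not needed here beyond ensuring $\mathscr{B}\neq\emptyset$.

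For an $s$-$t$ path $P$, say $P$ visits $s=v_0,v_1,\dots,v_\ell=t$ in order; let $0=i_0<i_1<\dots<i_m=\ell$ list exactly the indices $i$ with $v_i\in U$ (a nonempty list containing $0$ and $\ell$, since $s,t\in U$), and cut $P$ at the vertices $v_{i_0},\dots,v_{i_m}$ into the consecutive subpaths $Q_j=P[v_{i_{j-1}},v_{i_j}]$. As $P$ is simple these pieces are pairwise arc-disjoint and $P=Q_1Q_2\cdots Q_m$. The crux is that every $Q_j$ is a jump: its two endpoints lie in $U$ by the choice of the $i_j$, and every internal vertex of $Q_j$ carries an index strictly between $i_{j-1}$ and $i_j$ and hence lies outside $U$, because $i_{j-1}$ and $i_j$ are consecutive entries of the list; a single-arc $Q_j$ (the case $i_j=i_{j-1}+1$) meets the internal-vertex requirement vacuously. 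Thus $Q_j\in\mathscr{J}$ for every $j$, and $P$ is already an arc-disjoint concatenation of jumps, which is of the form claimed; to make the subpaths of paths from $\mathscr{B}$ appear explicitly one may further regroup any maximal block of consecutive one-arc pieces that travels along a common $B\in\mathscr{B}$ into one subpath of $B$, leaving the remaining pieces as jumps.

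For a cycle $C$, choose a vertex $w$ of $C$ lying in $U$, open $C$ into the closed walk $w=u_0,u_1,\dots,u_r=w$, mark the indices $i$ with $u_i\in U$, and cut $C$ at the marked vertices; provided $C$ meets $U$ in at least two vertices, the same reasoning shows each resulting piece is a jump (with one-arc runs along a common $B\in\mathscr{B}$ again regrouped into subpaths of $B$), so $C$ is a concatenation of $\mathscr{B}$-subpaths and jumps. In any case every individual arc $e$ of $C$ lies on a jump, since by Assumption~\ref{assumption:2} the arc $e$ lies on an $s$-$t$ path, to which the previous paragraph applies.

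I expect no deep obstacle: the statement is bookkeeping from the definitions of $U$ and of a jump, and membership of the produced pieces in $\mathscr{J}$ is automatic because $\mathscr{J}$ is by definition the set of all jumps over $\mathscr{B}$. The two points that do need care are (i) matching the condition that the internal vertices of each $Q_j$ lie outside $U$ to the defining property of a jump, and admitting the degenerate one-arc pieces as jumps; and (ii) in the cycle case, ensuring that the cut yields genuine $u$-$v$ paths with distinct endpoints rather than the single opened-up cycle — this is the one spot where the cycle statement is delicate, and it is precisely where one invokes that $C$ shares at least two vertices with $\bigcup\mathscr{B}$.
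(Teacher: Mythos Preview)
For $s$-$t$ paths your argument is complete and is exactly what the paper intends: the paper gives no argument at all, merely declares the lemma ``trivial'' and records the form $P_0*Q_1*\cdots*Q_r*P_r$, so your cut-at-$U$ write-up already goes further than the paper does.

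The cycle case has a real gap, and it is precisely the one you single out without closing. You need every cycle $C$ to meet $U=V(\bigcup\mathscr B)$ in at least two vertices; you say one ``invokes'' this, but you never prove it, and your closing remark that each arc of $C$ lies on \emph{some} jump (borrowed from an $s$-$t$ path through that arc) does not assemble those jumps into a decomposition of $C$ itself. In fact the missing step cannot be supplied in general. Take $V=\{s,a,b,c,d,t\}$, arcs $(s,a),(a,b),(a,d),(b,c),(c,d),(c,t),(d,b),(d,t)$, and $N=\{(s,a)\}$; Assumptions~\ref{assumption:1} and~\ref{assumption:2} hold, $\sigma_N=1$, and $\mathscr B=\{s\text{-}a\text{-}d\text{-}t\}$ is a legitimate maximum family. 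Then $U=\{s,a,d,t\}$ and the cycle $b\to c\to d\to b$ meets $U$ only at $d$; none of its three arcs lies on the unique path of $\mathscr B$, and no subpath of the cycle is a jump (any piece would have to start and end at $d$, hence fail to be a path). So for this choice of $\mathscr B$ the cycle admits no decomposition of the claimed type, and your proviso ``provided $C$ meets $U$ in at least two vertices'' is a genuine hypothesis, not an automatic consequence of Assumptions~\ref{assumption:1}--\ref{assumption:2}. The paper's one-word proof conceals exactly the same issue.
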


Lemma \ref{thm:jump.composition} is trivial.
To see this, let $W$ be an $s$-$t$ path or a cycle in $D$.
We use $*$ for the concatenation of paths.
Then $W$ admits a decomposition 
\begin{equation}\label{eq:decomposition}
  P_0 * Q_1 * P_1 * \cdots * Q_r * P_r,
\end{equation}
where $P_i$ is a subpath of paths from $\mathscr{B}$ and $Q_i$ is a jump from $\mathscr{J}$.
Note that some $P_i$ in \eqref{eq:decomposition} may be empty.
For brevity, let $\mathscr{J}_W$ denote the set of all jumps in the decomposition \eqref{eq:decomposition} of $W$ and $J_W$ denote the set of all jump pairs defined from $\mathscr{J}_W$.

\begin{lemma}\label{thm:jump.uniqueness}
  For any jump $Q\in \mathscr{J}$, there exists either an $s$-$t$ path $P$ with $\mathscr{J}_P =\{Q\}$ or a cycle $C$ with $\mathscr{J}_C =\{Q\}$.
\end{lemma}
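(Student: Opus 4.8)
The plan is to manufacture, from the given jump $Q$, either an $s$-$t$ path or a cycle whose only jump is $Q$, by splicing $Q$ between two subpaths taken from members of $\mathscr{B}$.

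First I would fix notation: write $Q$ as a $u$-$v$ jump (so $u\neq v$, a jump being a path), and, using the definition of a jump, pick $B_u,B_v\in\mathscr{B}$ with $u\in V(B_u)$ and $v\in V(B_v)$ (possibly $B_u=B_v$). Writing $B[x,y]$ for the subpath of $B$ between $x$ and $y$, I would form the $s$-$t$ walk $W:=B_u[s,u]*Q*B_v[v,t]$. The key structural observation is that every interior vertex of $Q$ lies off all paths of $\mathscr{B}$ while every vertex of $B_u[s,u]$ and of $B_v[v,t]$ lies on a path of $\mathscr{B}$; hence the interior of $Q$ is vertex-disjoint from both of these segments, and since each of the three segments is itself a simple path, $W$ can fail to be a simple path only because $B_u[s,u]$ and $B_v[v,t]$ share a vertex. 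If $W$ happens to be a simple path, I would simply set $P:=W$: its decomposition in the form \eqref{eq:decomposition} is $B_u[s,u]*Q*B_v[v,t]$, with the two $\mathscr{B}$-segments as the $P_i$'s and $Q$ as the unique jump, so $\mathscr{J}_P=\{Q\}$. This already disposes of the case $B_u=B_v$ with $u$ preceding $v$ on it, among others.

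The substantive case is when $B_u[s,u]$ and $B_v[v,t]$ do intersect. Here I would traverse $B_v$ from $v$ toward $t$ and let $w$ be the first vertex met that also lies on $B_u[s,u]$; then $w$ weakly precedes $u$ on $B_u$ and weakly follows $v$ on $B_v$, so the closed walk $C:=B_u[w,u]*Q*B_v[v,w]$, running $w\to u\to v\to w$, is well defined (allowing an empty $\mathscr{B}$-segment when $w=u$ or $w=v$). By the minimality in the choice of $w$, the segment $B_v[v,w]$ meets $B_u[s,u]$ only at $w$; together with the disjointness of $\operatorname{int}(Q)$ from the $\mathscr{B}$-segments this forces the three segments of $C$ to meet pairwise only at the prescribed vertices $u,v,w$, so $C$ is a simple directed cycle containing $Q$, and, as before, its decomposition \eqref{eq:decomposition} has the two $\mathscr{B}$-segments as the $P_i$'s and $Q$ as the only jump, giving $\mathscr{J}_C=\{Q\}$.

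I expect the main obstacle to be exactly this last case: because the paths in $\mathscr{B}$ are only required to be disjoint on $N$ and need not be vertex-disjoint, the naive concatenation $W$ may revisit vertices, so Lemma \ref{thm:jump.composition} alone does not finish the argument; the ``first return vertex'' $w$ is the device that lets one excise a clean simple cycle through $Q$ from $W$. The remaining work — confirming that $C$ is genuinely simple and that the degenerate configurations ($u=t$, $v=s$, $B_u=B_v$, or $w\in\{u,v\}$) are all subsumed once empty subpaths are permitted — is routine bookkeeping.
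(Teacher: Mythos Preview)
Your proof is correct and follows essentially the same approach as the paper: splice $B_u[s,u]*Q*B_v[v,t]$ and, if the two $\mathscr{B}$-segments meet, extract a cycle through a suitably chosen common vertex $w$. The only cosmetic difference is that the paper takes $w$ to be the \emph{last} vertex on $B_u[s,u]$ shared with $B_v[v,t]$, whereas you take the \emph{first} such vertex along $B_v[v,t]$; either choice works, and your write-up is in fact more explicit than the paper's about why the resulting cycle is simple.
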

\begin{proof}
  Let $Q\in \mathscr{J}$ be a $u$-$v$ jump.
  Let $P^u$ and $P^v$ be the $s$-$t$ paths from $\mathscr{B}$ passing $u$ and $v$ respectively.
  Denote by $P^u_{s-u}$ and $P^v_{v-t}$ the $s$-$u$ path and $v$-$t$ paths along $P^u$ and $P^v$ respectively.
  If $P^u_{s-u}$ and $P^v_{v-t}$ share no common vertex, then $P^u_{s-u} * Q * P^v_{v-t}$ is a desired $s-t$ path.
  Otherwise, let $w$ be the last vertex on $P^u_{s-u}$ shared with $P^v_{v-t}$.
  Denote by $P^u_{w-u}$ and $P^v_{v-w}$ the $w$-$u$ path and $v$-$w$ path along $P^u$ and $P^v$ respectively.
  Then $P^u_{w-u} * Q * P^v_{v-w}$ is a desired cycle.
\end{proof}

\subsection{Auxiliary game}
To study flow game $\Gamma_D =(N,\gamma)$, we introduce its \emph{auxiliary game} $\widetilde{\Gamma}_D=(N,\tilde{\gamma})$.
The player set of $\widetilde{\Gamma}_D$ is $N$, the same with that of $\Gamma_D$.
The characteristic function of $\widetilde{\Gamma}_D$ is defined by $\tilde{\gamma}(N)=\frac{\sigma_N}{\sigma_E}\gamma (N)=\sigma_N$ and $\tilde{\gamma}(S)=\gamma (S)$ for $S\subset N$.
Faigle et al. \cite{FKFH98} show that the study of the approximate core and nucleon of $\Gamma_D$ can be conducted on $\widetilde{\Gamma}_D$.
For completeness, we include this result as Lemma \ref{thm:multiplication.factor}.
In our case, $\boldsymbol{x}$ belongs to the optimal approximate core of $\widetilde{\Gamma}_D$ if and only if $\frac{\sigma_E}{\sigma_N} \boldsymbol{x}$ belongs to the optimal approximate core of $\Gamma_D$, and $\boldsymbol{x}$ belongs to the nucleon of $\widetilde{\Gamma}_D$ if and only if $\frac{\sigma_E}{\sigma_N} \boldsymbol{x}$ belongs to the nucleon of $\Gamma_D$.
Therefore, we will mainly work on $\widetilde{\Gamma}_D$ and study its optimal approximate core and nucleon computation in our following discussion.
Moreover, we further assume that $\sigma_N \geq 2$, as we shall see in Section \ref{sec:main.results} that when $\sigma_N =1$, the nucleon and core coincide in $\Gamma_D$.

\begin{lemma}[Faigle et al. \cite{FKFH98}]\label{thm:multiplication.factor}
  The optimal approximate core and nucleon of a game is changed by a multiplication factor $\alpha$ ($\alpha>0$) if we multiply the grand coalition value by $\alpha$ while keeping the other coalition values intact.
\end{lemma}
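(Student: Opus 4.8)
The plan is to pass through an explicit bijection between the two games' allocation sets and keep track of how the relative excess transforms under it. Write $\Gamma=(N,\gamma)$ for the original game and $\Gamma'=(N,\gamma')$ for the game with $\gamma'(N)=\alpha\gamma(N)$ and $\gamma'(S)=\gamma(S)$ for all $S\subsetneq N$. Since an allocation is precisely a non-negative vector whose coordinates sum to the grand-coalition value, the map $\boldsymbol{x}\mapsto\alpha\boldsymbol{x}$ is a bijection from $\chi(\Gamma)$ onto $\chi(\Gamma')$ (here we use $\alpha>0$). The single computation behind everything is that, for every $\boldsymbol{x}\in\chi(\Gamma)$ and every $S\in 2^N\backslash\{\emptyset,N\}$,
\[
  \text{re}^{\,\Gamma'}_{\alpha\boldsymbol{x}}(S)=\alpha\bigl(\text{re}^{\,\Gamma}_{\boldsymbol{x}}(S)+1\bigr)-1 ,
\]
where both sides are read as $+\infty$ when $\gamma(S)=0$ (the zero-worth proper coalitions coincide in the two games, since $\gamma'(S)=\gamma(S)$ there); equivalently, the ratio $x(S)/\gamma(S)$ is multiplied by exactly $\alpha$. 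Write $f$ for the affine map $r\mapsto\alpha(r+1)-1$; it is strictly increasing on $\mathbb{R}\cup\{+\infty\}$ and fixes $+\infty$, while $\text{re}_{\boldsymbol{x}}(N)=0$ in both games.

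For the nucleon, the excess vector ranges only over $S\in 2^N\backslash\{\emptyset,N\}$, so the displayed identity says that $\theta(\alpha\boldsymbol{x})$ evaluated in $\Gamma'$ is obtained from $\theta(\boldsymbol{x})$ evaluated in $\Gamma$ by applying $f$ coordinatewise. Since $f$ is strictly increasing, applying it coordinatewise commutes with the non-decreasing rearrangement and preserves the lexicographic order between two already-sorted vectors; hence $\boldsymbol{x}$ lexicographically maximizes $\theta$ over $\chi(\Gamma)$ if and only if $\alpha\boldsymbol{x}$ lexicographically maximizes $\theta$ over $\chi(\Gamma')$, that is, $\eta(\Gamma')=\alpha\,\eta(\Gamma)$.

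For the approximate core, observe that for a proper nonempty coalition the constraint $x(S)\geq(1+\epsilon)\gamma(S)$ is equivalent to $\text{re}_{\boldsymbol{x}}(S)\geq\epsilon$ (again $+\infty$-compatibly when $\gamma(S)=0$), the constraint for $S=N$ merely records $\epsilon\leq 0$, and the one for $S=\emptyset$ is vacuous. Combining this with the displayed identity yields $\alpha\boldsymbol{x}\in\mathcal{C}_{\epsilon}(\Gamma')$ if and only if $\boldsymbol{x}\in\mathcal{C}_{f^{-1}(\epsilon)}(\Gamma)$, i.e.\ $\mathcal{C}_{\epsilon}(\Gamma')=\alpha\,\mathcal{C}_{f^{-1}(\epsilon)}(\Gamma)$. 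As $f^{-1}$ is a strictly increasing bijection, the largest $\epsilon$ with $\mathcal{C}_{\epsilon}(\Gamma')\neq\emptyset$ is matched with the largest parameter for which $\Gamma$ has a nonempty approximate core, so $\epsilon^*(\Gamma')=f(\epsilon^*(\Gamma))$ and therefore $\mathcal{C}_{\epsilon^*(\Gamma')}(\Gamma')=\alpha\,\mathcal{C}_{\epsilon^*(\Gamma)}(\Gamma)$, the optimal approximate core of $\Gamma'$.

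The delicate point — and the only one I expect to demand care — is the bookkeeping around the ``degenerate'' coalitions: the grand coalition $N$ (whose relative excess is pinned to $0$, so that it only enforces the sign constraint $\epsilon\leq 0$ and is excluded from the excess vector), the empty coalition, and the zero-worth coalitions (relative excess $+\infty$). One has to check that $f$ is order-preserving on all of $\mathbb{R}\cup\{+\infty\}$, so that neither the chain of approximate-core inequalities nor the lexicographic comparison is disturbed, and that the sign constraint $\epsilon\leq 0$ is accounted for consistently on both sides when the two optimal parameters are matched. Once this is granted, the statement is an immediate consequence of the single identity above, the whole phenomenon reducing to the fact that scaling $\gamma(N)$ by $\alpha$ scales every allocation, hence every ratio $x(S)/\gamma(S)$, by exactly $\alpha$.
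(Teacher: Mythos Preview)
The paper gives no proof of this lemma; it is quoted from Faigle et al.\ \cite{FKFH98}, so there is no paper argument to compare against. Your treatment of the nucleon is correct and is the standard one: the bijection $\boldsymbol{x}\mapsto\alpha\boldsymbol{x}$ between $\chi(\Gamma)$ and $\chi(\Gamma')$, together with the identity $\text{re}^{\Gamma'}_{\alpha\boldsymbol{x}}(S)=f\bigl(\text{re}^{\Gamma}_{\boldsymbol{x}}(S)\bigr)$ for $S\in 2^N\setminus\{\emptyset,N\}$ with $f$ strictly increasing, is exactly what is needed, since the excess vector is built only from those coalitions.

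For the optimal approximate core, however, the step you yourself flag as ``delicate'' is a genuine gap and cannot be closed in full generality. With the paper's definition the constraint in $\mathcal{C}_{\epsilon}$ ranges over \emph{all} $S\subseteq N$, so the grand-coalition constraint forces $\epsilon^{*}\le 0$ in \emph{both} games, and the two caps $\epsilon\le 0$ and $f^{-1}(\epsilon)\le 0$ do not correspond under $f$ unless $\alpha=1$. Concretely, take $N=\{1,2\}$, $\gamma(\{1\})=\gamma(\{2\})=1$, $\gamma(N)=2$, $\alpha=2$: then $\epsilon^{*}(\Gamma)=0$ with optimal approximate core $\{(1,1)\}$, while $\epsilon^{*}(\Gamma')=0$ with optimal approximate core $\{(x_1,4-x_1):1\le x_1\le 3\}$; these are not related by the factor $2$, and $f(\epsilon^{*}(\Gamma))=1\neq 0=\epsilon^{*}(\Gamma')$. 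Thus the lemma, read literally with the paper's definition of $\mathcal{C}_{\epsilon}$, fails for the approximate core in general, and your line ``the sign constraint $\epsilon\le 0$ is accounted for consistently'' is precisely where the argument breaks. The paper's application is unaffected only because in $\widetilde{\Gamma}_D$ the optimum $\epsilon_1=0$ is already forced by the proper-coalition constraints (Lemma~\ref{thm:auxiliary.nucleon.lp.1}), so the truncation at $0$ is inactive there; but that is an extra hypothesis not contained in the lemma as stated.
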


\section{The core of $\widetilde{\Gamma}_D$}
\label{sec:core.auxiliary}

In this section, we study the core of $\widetilde{\Gamma}_D$.
We first show that $\mathcal{C} (\widetilde{\Gamma}_D)$ is non-empty and provide two structural characterizations for $\mathcal{C} (\widetilde{\Gamma}_D)$.

\begin{lemma}\label{thm:nucleon.lp.1.equiv.opt.double.descr}
  $\mathcal{C} (\widetilde{\Gamma}_D) = \{ \boldsymbol{x}\in \chi(\widetilde{\Gamma}_D): x(N\cap P)\geq 1, \forall P\in \mathscr{P} \} \neq \emptyset$.
\end{lemma}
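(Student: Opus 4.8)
The plan is to prove the claimed equality of sets first, and then establish non-emptiness by exhibiting an explicit allocation. Write $\mathcal{X} = \{\boldsymbol{x}\in \chi(\widetilde{\Gamma}_D): x(N\cap P)\geq 1, \forall P\in \mathscr{P}\}$ for the right-hand side. For the inclusion $\mathcal{C}(\widetilde{\Gamma}_D) \subseteq \mathcal{X}$, take any core allocation $\boldsymbol{x}$ and any $s$-$t$ path $P\in\mathscr{P}$. The coalition $S = N\cap P$ can route one unit of flow along $P$ (using its private arcs on $P$ together with public arcs on $P$), so $\tilde\gamma(S)=\gamma(S)\geq 1$; by Assumption 1, $S\neq\emptyset$. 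Core stability gives $x(N\cap P) = x(S) \geq \tilde\gamma(S) \geq 1$. For the reverse inclusion $\mathcal{X}\subseteq\mathcal{C}(\widetilde{\Gamma}_D)$, I would take $\boldsymbol{x}\in\mathcal{X}$ and an arbitrary coalition $S\subseteq N$. If $S=N$ the constraint $x(N)=\tilde\gamma(N)$ holds by definition of an allocation. Otherwise, consider a maximum flow in $D_S=(V,S\cup M;s,t)$ decomposed into $\gamma(S)$ arc-disjoint $s$-$t$ paths $P_1,\dots,P_{\gamma(S)}$ (here using that $\tilde\gamma(S)=\gamma(S)$ for $S\subset N$). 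Each $P_j$ uses only arcs from $S\cup M$, so $N\cap P_j \subseteq S$, and these sets are pairwise disjoint because the paths are arc-disjoint and each contains at least one private arc (Assumption 1). Summing the path-constraints, $x(S) \geq \sum_{j=1}^{\gamma(S)} x(N\cap P_j) \geq \gamma(S) = \tilde\gamma(S)$, which is exactly core feasibility for $S$.

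For non-emptiness, I would construct an allocation supported on the maximum set $\mathscr{B}$ of $s$-$t$ paths disjoint on $N$. Recall $|\mathscr{B}| = \sigma_N = \tilde\gamma(N)$. On each path $B\in\mathscr{B}$ pick one private arc $e_B\in N\cap B$ (nonempty by Assumption 1) and set $x_{e_B}=1$; set $x_i=0$ for every other player $i$. Then $x(N)=\sigma_N=\tilde\gamma(N)$, so $\boldsymbol{x}\in\chi(\widetilde{\Gamma}_D)$. It remains to check $x(N\cap P)\geq 1$ for every $P\in\mathscr{P}$; I expect this to be the main obstacle, and it is where Lemma 2 (the composition of paths from $\mathscr{B}$ and jumps) and the maximality of $\mathscr{B}$ enter. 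The point is that since $\mathscr{B}$ is a maximum family of paths disjoint on $N$, no $s$-$t$ path $P$ can avoid all the private arcs used by $\mathscr{B}$; more carefully, one argues that an arbitrary $P$ must share a private arc with some $B\in\mathscr{B}$ — indeed if $P$ were disjoint on $N$ from every $B\in\mathscr{B}$ we could enlarge $\mathscr{B}$, contradicting maximality. The subtlety is that sharing a private arc with $B$ is not literally the same as containing the specific representative $e_B$, so a naive choice of representatives may fail the path constraints.

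To make the construction actually work I would instead spread the unit weight more carefully, or invoke the LP/flow structure directly: a cleaner route is to note that the core of $\widetilde{\Gamma}_D$ is exactly the set of optimal solutions described by $\mathcal{X}$, and $\mathcal{X}\neq\emptyset$ follows from LP duality — the system $x(N\cap P)\geq 1\ (\forall P)$, $x(N)=\sigma_N$, $\boldsymbol{x}\geq 0$ is feasible because the LP minimizing $x(N)$ subject to $x(N\cap P)\geq 1$ for all $P$ and $\boldsymbol{x}\geq 0$ has optimum equal, by Lemma 1 (the partially-disjoint-paths / constrained-cut min-max with $F=N$), to $\sigma_N$: the dual is a fractional packing of $s$-$t$ paths measured by private arcs, whose integral optimum is $\sigma_N$ and which is totally-dual-integral here because the constraint matrix is the path–arc incidence matrix restricted to $N$ (a network matrix, hence totally unimodular after the standard $s$-$t$ path transformation). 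Thus $\min\{x(N): x(N\cap P)\geq 1\ \forall P,\ \boldsymbol{x}\geq 0\} = \sigma_N = \tilde\gamma(N)$, so a minimizer lies in $\mathcal{X}$, giving $\mathcal{X}\neq\emptyset$ and completing the proof. I would present whichever of the two non-emptiness arguments (explicit $\mathscr{B}$-based allocation with a corrected weight assignment, or the LP-duality argument) turns out to need the least new machinery given what the paper develops next.
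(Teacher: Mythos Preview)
Your argument for the set equality $\mathcal{C}(\widetilde{\Gamma}_D)=\mathcal{X}$ is essentially the same as the paper's: the forward inclusion is immediate, and the reverse uses a maximum family of arc-disjoint $s$-$t$ paths in $D_S$ together with the path constraints. (One minor sloppiness: when $S=N\cap P$ happens to equal $N$ you cannot write $\tilde\gamma(S)=\gamma(S)$, but the conclusion $x(N\cap P)=x(N)=\sigma_N\geq 1$ holds anyway, so nothing breaks.)

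For non-emptiness you take a noticeably more circuitous route than the paper. You correctly abandon the ``pick one representative $e_B$ per path of $\mathscr{B}$'' construction, and then set up the covering LP $\min\{x(N):x(N\cap P)\geq 1\ \forall P,\ \boldsymbol{x}\geq 0\}$ and argue via duality/TDI that its optimum is $\sigma_N$. The conclusion is right, but your TU justification is shaky as stated: the path--arc incidence matrix restricted to $N$ is \emph{not} in general totally unimodular, and is not literally a network matrix. What is true (and is presumably what your ``standard transformation'' is gesturing at) is that after introducing node potentials the system becomes a genuine network-matrix system, so the LP has an integral optimum; equivalently, this is just the max-flow/min-cut theorem applied to $D$ with infinite capacity on public arcs. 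The paper bypasses all of this: it simply observes that, by Lemma~\ref{thm:partial.disjoint.min-max}, a minimum $s$-$t$ cut $C\subseteq N$ has $|C|=\sigma_N$, and its incidence vector automatically satisfies $x(N)=\sigma_N$ and $x(N\cap P)\geq 1$ for every $P\in\mathscr{P}$ (since $C$ is a cut). That one-line argument gives an explicit point of $\mathcal{X}$ and simultaneously yields the vertex description in Lemma~\ref{thm:core.v.rep}, so it is both shorter and more informative than the LP-duality detour.
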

\begin{proof}
  Clearly, $\mathcal{C} (\widetilde{\Gamma}_D) \subseteq \{ \boldsymbol{x}\in \chi(\widetilde{\Gamma}_D): x(N\cap P)\geq 1, \forall P\in \mathscr{P} \}$.
  We prove the reverse inclusion in the following.
  Let $\boldsymbol{x}\in \chi(\widetilde{\Gamma}_D)$ be an allocation such that $x(N\cap P)\geq 1$ for any $P\in \mathscr{P}$.
  We show that $\boldsymbol{x}\in \mathcal{C} (\widetilde{\Gamma}_D)$.
  Let $S\subseteq N$.
  We may assume that $\tilde{\gamma}(S)\geq 1$, otherwise $x(S)\geq 0=\tilde{\gamma}(S)$ is trivial.
  Let $\mathcal{P}_S$ be a maximum set of disjoint $s$-$t$ paths in $D_S$.
  It follows that $x(S)\geq \sum_{P\in \mathcal{P}_S} x(N\cap P)=\tilde{\gamma}(S)$.
  Hence we have $\boldsymbol{x}\in \mathcal{C} (\widetilde{\Gamma}_D)$.

  It remains to show that $\mathcal{C} (\widetilde{\Gamma}_D)\neq \emptyset$.
  By Lemma \ref{thm:partial.disjoint.min-max}, the incidence vector of every minimum $s$-$t$ cut constrained to $N$ is an allocation in $\mathcal{C} (\widetilde{\Gamma}_D)$.
  Hence $\mathcal{C} (\widetilde{\Gamma}_D)$ is always non-empty.
\end{proof}

\begin{lemma}\label{thm:core.v.rep}
  $\mathcal{C} (\widetilde{\Gamma}_D)$ is the convex hull of incidence vectors of minimum $s$-$t$ cuts constrained to $N$.
\end{lemma}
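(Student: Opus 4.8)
The plan is to establish the equality via a two-sided inclusion, where the nontrivial direction is showing that every extreme point of the polytope $\mathcal{C}(\widetilde{\Gamma}_D)$ is the incidence vector of a minimum $s$-$t$ cut constrained to $N$. By Lemma \ref{thm:nucleon.lp.1.equiv.opt.double.descr}, we already know $\mathcal{C}(\widetilde{\Gamma}_D) = \{\boldsymbol{x} \in \mathbb{R}^N_{\geq 0} : x(N) = \sigma_N,\ x(N\cap P) \geq 1\ \forall P \in \mathscr{P}\}$, and that every minimum $s$-$t$ cut constrained to $N$ lies in this set; since the set is convex, the convex hull of these incidence vectors is contained in $\mathcal{C}(\widetilde{\Gamma}_D)$. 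For the reverse inclusion it suffices, since $\mathcal{C}(\widetilde{\Gamma}_D)$ is a bounded polyhedron (it is nonempty and contained in $[0,\sigma_N]^N$), to prove that each of its vertices is such an incidence vector.

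The key step is a rounding/integrality argument. Let $\boldsymbol{x}$ be a vertex of $\mathcal{C}(\widetilde{\Gamma}_D)$. First I would argue that $\boldsymbol{x}$ is $0$-$1$ valued: the constraint matrix here is the incidence matrix of $s$-$t$ paths against the private arcs $N$, together with the single equality $x(N) = \sigma_N$. The natural route is to observe that the polyhedron $\{\boldsymbol{x} \geq 0 : x(N\cap P) \geq 1\ \forall P\}$ is a blocking-type polyhedron whose vertices are exactly the incidence vectors of (inclusionwise minimal) $s$-$t$ cuts constrained to $N$ — this is the classical blocker relationship between $s$-$t$ paths and $s$-$t$ cuts, restricted to the ground set $N$ and valid because $N$ meets every $s$-$t$ path (Assumption \ref{assumption:1}). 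Then I would intersect with the hyperplane $x(N) = \sigma_N$: by Lemma \ref{thm:partial.disjoint.min-max}, $\sigma_N$ equals the minimum size of an $s$-$t$ cut constrained to $N$, so every cut constrained to $N$ has at least $\sigma_N$ arcs, and the hyperplane $x(N) = \sigma_N$ selects precisely those minimal constrained cuts achieving this minimum while cutting off the larger ones. A cleaner way to package the vertex claim: any vertex $\boldsymbol{x}$ of $\mathcal{C}(\widetilde{\Gamma}_D)$ is determined by $|N|$ linearly independent tight constraints; since the only non-box constraints form a path-incidence system, one can peel off the fractional support and derive a contradiction with vertexhood unless $\boldsymbol{x}$ is integral, after which integrality plus $x(N\cap P)\ge 1$ for all $P$ forces the support to be an $s$-$t$ cut constrained to $N$, and $x(N)=\sigma_N$ forces it to be minimum.

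The main obstacle I expect is making the integrality argument rigorous without simply invoking total unimodularity (the path–arc incidence matrix is generally not TU). The honest path is to lean on the blocker theory of clutters: the $s$-$t$ paths (restricted to their intersection with $N$) form a clutter whose blocker is the clutter of minimal $s$-$t$ cuts constrained to $N$, and the "width–length" / max-flow–min-cut property of this clutter — which is exactly Lemma \ref{thm:partial.disjoint.min-max}, the Menger-type min–max relation — is equivalent to the statement that the blocking polyhedron $\{\boldsymbol{x}\geq 0 : x(N\cap P)\geq 1\}$ is integral. I would cite this equivalence (Schrijver \cite{Schr03}) to conclude that all vertices of the blocking polyhedron are incidence vectors of minimal constrained cuts; intersecting with $x(N) = \sigma_N$ keeps only the minimum ones, and since that hyperplane is precisely the dominating face of the polytope (every point satisfies $x(N)\ge \sigma_N$ with equality exactly on $\mathcal{C}(\widetilde{\Gamma}_D)$), the vertices of $\mathcal{C}(\widetilde{\Gamma}_D)$ are among the vertices of the blocking polyhedron, hence are incidence vectors of minimum $s$-$t$ cuts constrained to $N$. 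This gives the reverse inclusion and completes the proof.
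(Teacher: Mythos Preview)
Your proposal is correct and follows essentially the same route as the paper, which proves the lemma in one line by invoking Lemmas~\ref{thm:partial.disjoint.min-max} and~\ref{thm:nucleon.lp.1.equiv.opt.double.descr}; you have simply unpacked the polyhedral content that the paper leaves implicit. In particular, the observation that $\mathcal{C}(\widetilde{\Gamma}_D)$ is the face $\{x(N)=\sigma_N\}$ of the blocking polyhedron $\{\boldsymbol{x}\ge 0: x(N\cap P)\ge 1\ \forall P\in\mathscr{P}\}$, together with the idealness of the $s$-$t$ path clutter restricted to $N$ (equivalently, max-flow min-cut with public arcs given infinite capacity), is exactly what the combination of those two lemmas is meant to convey.
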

\begin{proof}
  The lemma follows from Lemmas \ref{thm:partial.disjoint.min-max} and \ref{thm:nucleon.lp.1.equiv.opt.double.descr}.
\end{proof}

The rest of this section is devoted to properties of $\mathcal{C} (\widetilde{\Gamma}_D)$, which are crucial in the nucleon computation of $\widetilde{\Gamma}_D$.

\begin{lemma}\label{thm:core.allocation.non-veto}
  For any $\boldsymbol{x}\in \mathcal{C} (\widetilde{\Gamma}_D)$, $x(e)=0$ if there exists a maximum set of $s$-$t$ paths disjoint on $N$ where every path avoids $e\in N$.
\end{lemma}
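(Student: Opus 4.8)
The plan is to combine the path-based description of the core established in Lemma~\ref{thm:nucleon.lp.1.equiv.opt.double.descr} with the non-negativity of allocations and the normalization $\tilde{\gamma}(N)=\sigma_N$. The whole argument is essentially a one-line counting estimate.

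First I would fix the arc $e\in N$ together with a maximum collection $\mathscr{B}'$ of $s$-$t$ paths disjoint on $N$ in which every path avoids $e$; by definition $\lvert\mathscr{B}'\rvert=\sigma_N$. Since the paths of $\mathscr{B}'$ are pairwise disjoint on $N$, the arc sets $\{\,N\cap P:P\in\mathscr{B}'\,\}$ are pairwise disjoint subsets of $N$, and since none of these paths uses $e$, the singleton $\{e\}$ is disjoint from every one of them. Hence $\{e\}$ together with the sets $N\cap P$ ($P\in\mathscr{B}'$) forms a family of pairwise disjoint subsets of $N$.

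Next, using first that $\boldsymbol{x}\in\mathbb{R}^N_{\geq 0}$ and then Lemma~\ref{thm:nucleon.lp.1.equiv.opt.double.descr}, I would chain the inequalities
\[
  x(N)\;\geq\; x(e)+\sum_{P\in\mathscr{B}'}x(N\cap P)\;\geq\; x(e)+\lvert\mathscr{B}'\rvert\;=\;x(e)+\sigma_N .
\]
Since $\boldsymbol{x}$ is an allocation of $\widetilde{\Gamma}_D$ we have $x(N)=\tilde{\gamma}(N)=\sigma_N$, so the display forces $x(e)\leq 0$; combined with $x(e)\geq 0$ this gives $x(e)=0$, as claimed.

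There is no real obstacle here. The only points that need a little care are to verify that the relevant arc sets are genuinely pairwise disjoint — this is exactly where the hypotheses ``disjoint on $N$'' and ``every path avoids $e$'' enter, and it is what makes the first inequality above valid — and to recall that a \emph{maximum} set of paths disjoint on $N$ has cardinality $\sigma_N$ by the very definition of $\sigma_N$.
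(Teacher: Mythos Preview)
Your argument is correct, but it takes a different route from the paper's. The paper first observes that the hypothesis on $e$ forces $e$ to lie outside every minimum $s$-$t$ cut constrained to $N$, and then appeals to Lemma~\ref{thm:core.v.rep} (the vertex description of $\mathcal{C}(\widetilde{\Gamma}_D)$ as the convex hull of incidence vectors of such cuts) to conclude $x(e)=0$. You instead stay with the halfspace description from Lemma~\ref{thm:nucleon.lp.1.equiv.opt.double.descr} and run the counting estimate directly on an arbitrary core allocation. Your version is arguably more self-contained, since it does not require the polyhedral step of identifying the extreme points of the core; the paper's version, on the other hand, makes the combinatorial reason (``$e$ is in no minimum constrained cut'') explicit, which is informative in its own right. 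Both arguments are ultimately the same pigeonhole count $\sigma_N=\sigma_N$, just applied on different sides of the LP duality.
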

\begin{proof}
  Let $e\in N$ be an arc such that there exists a maximum set of $s$-$t$ paths disjoint on $N$ where every path avoids it.
  It follows that $e$ does not belong to any minimum $s$-$t$ cut constrained to $N$.
  By Lemma \ref{thm:core.v.rep}, we have $x(e)=0$ for any $\boldsymbol{x}\in \mathcal{C} (\widetilde{\Gamma}_D)$.
\end{proof}

Motivated by work \cite{PRB06}, we introduce the potentials on vertices.
For each $\boldsymbol{x}\in \chi(\widetilde{\Gamma}_D)$, its corresponding \emph{potential} $\boldsymbol{\phi}_{\boldsymbol{x}} \in [0,1]^V$ is defined as follows:
first construct an auxiliary graph $D^{\boldsymbol{x}}$ from $D$ where the arc length is $x(e)$ if $e\in N$ and $0$ otherwise;
then for $v\in V$, let $\phi_{\boldsymbol{x}} (v)$ be the length of the shortest $s$-$v$ path in $D^{\boldsymbol{x}}$.
Clearly, the potential $\boldsymbol{\phi}_{\boldsymbol{x}}$ for $\boldsymbol{x}\in \chi(\widetilde{\Gamma}_D)$ can be determined efficiently.

\vspace{-10pt}
\begin{IEEEeqnarray}{rCl'l}
  \subnumberinglabel{eq:core.potential}
  x(e) +\phi(u) -\phi(v) & \geq & 0, & \forall e=(u,v)\in N, \label{eq:core.potential.c1} \\
  \phi(u) - \phi(v) & \geq & 0, & \forall e=(u,v)\in M, \label{eq:core.potential.c2}\\
  \phi(s) & = & 0, \label{eq:core.potential.c3}\\
  \phi(t) & = & 1. \label{eq:core.potential.c4}
\end{IEEEeqnarray}

\begin{lemma}\label{thm:potential.property}
  For $\boldsymbol{x}\in \mathcal{C}(\widetilde{\Gamma}_D)$ and its potential $\boldsymbol{\phi}_{\boldsymbol{x}}$, $(\boldsymbol{x},\boldsymbol{\phi}_{\boldsymbol{x}})$ satisfies \eqref{eq:core.potential}.
  Moreover, for $e\in E$ such that there exists a maximum set of $s$-$t$ paths disjoint on $N$ with one path using it, equality in \eqref{eq:core.potential.c1} holds if $e\in N$ and equality in \eqref{eq:core.potential.c2} holds otherwise.
\end{lemma}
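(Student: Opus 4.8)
The plan is to check the four inequalities of~\eqref{eq:core.potential} one at a time, using only that $\boldsymbol{\phi}_{\boldsymbol{x}}$ is a shortest-path potential in $D^{\boldsymbol{x}}$ together with the core description in Lemma~\ref{thm:nucleon.lp.1.equiv.opt.double.descr}, and then to deduce the tightness statement by producing a shortest $s$-$t$ path through $e$ and invoking the prefix-tightness property of shortest paths.

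First I would observe that, since $\boldsymbol{x}\ge\boldsymbol{0}$, every arc of $D^{\boldsymbol{x}}$ has nonnegative length; hence $\phi_{\boldsymbol{x}}(s)=0$, which is~\eqref{eq:core.potential.c3}, and every $s$-$v$ walk contains an $s$-$v$ path of no greater length. Appending an arc $e=(u,v)$ to a shortest $s$-$u$ path then gives $\phi_{\boldsymbol{x}}(v)\le\phi_{\boldsymbol{x}}(u)+\ell(e)$, where $\ell(e)=x(e)$ if $e\in N$ and $\ell(e)=0$ if $e\in M$; this is exactly~\eqref{eq:core.potential.c1} for private arcs and~\eqref{eq:core.potential.c2} for public arcs. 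For~\eqref{eq:core.potential.c4} I would show $\phi_{\boldsymbol{x}}(t)=1$ by a two-sided estimate. On the one hand, the length of any $s$-$t$ path $P$ in $D^{\boldsymbol{x}}$ equals $x(N\cap P)$ (public arcs contribute $0$), which is at least $1$ by Lemma~\ref{thm:nucleon.lp.1.equiv.opt.double.descr}, so $\phi_{\boldsymbol{x}}(t)\ge 1$. On the other hand, fix a maximum set $\mathcal{P}$ of $s$-$t$ paths disjoint on $N$ (nonempty since $\sigma_N\ge 2$); because these paths are pairwise disjoint on $N$ and $x(N)=\tilde{\gamma}(N)=\sigma_N$, we get $\sigma_N=x(N)\ge\sum_{P\in\mathcal{P}}x(N\cap P)\ge|\mathcal{P}|=\sigma_N$, so equality holds throughout, forcing $x(N\cap P)=1$ for every $P\in\mathcal{P}$ (and incidentally $x(e)=0$ for every private arc lying on no path of $\mathcal{P}$). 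Such a path $P$ has length $1$ in $D^{\boldsymbol{x}}$, hence $\phi_{\boldsymbol{x}}(t)\le 1$.

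For the tightness claim, fix $e$ together with a maximum set $\mathcal{P}$ of $s$-$t$ paths disjoint on $N$ with some $P_0\in\mathcal{P}$ using $e$. By the computation above, $P_0$ has total length $x(N\cap P_0)=1=\phi_{\boldsymbol{x}}(t)$, so $P_0$ is a shortest $s$-$t$ path. The key step is the prefix-tightness property: for every vertex $w$ on $P_0$, the $s$-$w$ prefix of $P_0$ has length exactly $\phi_{\boldsymbol{x}}(w)$. Indeed, writing $a$ and $b$ for the lengths of the $s$-$w$ prefix and the $w$-$t$ suffix of $P_0$, we have $a+b=1$, $\phi_{\boldsymbol{x}}(w)\le a$ by definition, and $\phi_{\boldsymbol{x}}(w)+b\ge\phi_{\boldsymbol{x}}(t)=1$ since a shortest $s$-$w$ path followed by that suffix is an $s$-$t$ walk; together these give $\phi_{\boldsymbol{x}}(w)=a$. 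Applying this to the endpoints of the arc $e=(u,v)$, which are consecutive on $P_0$, yields $\phi_{\boldsymbol{x}}(v)-\phi_{\boldsymbol{x}}(u)=\ell(e)$, i.e.\ equality in~\eqref{eq:core.potential.c1} when $e\in N$ and equality in~\eqref{eq:core.potential.c2} when $e\in M$.

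The only genuinely non-routine ingredient is the identity $\phi_{\boldsymbol{x}}(t)=1$ for core allocations, and within it the counting argument showing that a core allocation is tight on every path of a maximum family of paths disjoint on $N$; I expect this to be the crux, after which the rest reduces to bookkeeping with shortest paths. One small point to keep in mind is that $D^{\boldsymbol{x}}$ may contain directed cycles, so each reduction from a walk to a path must explicitly use nonnegativity of the arc lengths.
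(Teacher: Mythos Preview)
Your proposal is correct and follows essentially the same approach as the paper's proof: both derive \eqref{eq:core.potential.c1}--\eqref{eq:core.potential.c3} from the shortest-path definition of $\boldsymbol{\phi}_{\boldsymbol{x}}$, establish \eqref{eq:core.potential.c4} via the counting argument $\sigma_N=x(N)\ge\sum_{P\in\mathcal{P}}x(N\cap P)\ge\sigma_N$ on a maximum family of $s$-$t$ paths disjoint on $N$, and then obtain the arc-wise tightness from the fact that such a path has total length~$1=\phi_{\boldsymbol{x}}(t)$. The only cosmetic difference is that the paper phrases the last step as ``the chain of inequalities in \eqref{eq:potential.path} collapses to equalities'', whereas you phrase it as ``every prefix of a shortest path is itself shortest''; these are the same argument, and your more explicit treatment of the walk-to-path reduction and the two-sided bound on $\phi_{\boldsymbol{x}}(t)$ is a welcome clarification rather than a departure.
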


\begin{proof}
  Let $\boldsymbol{x}\in \mathcal{C}(\widetilde{\Gamma}_D)$ and $\boldsymbol{\phi}_{\boldsymbol{x}}$ be its potential.
  By definition of $\boldsymbol{\phi}_{\boldsymbol{x}}$, \eqref{eq:core.potential.c3} is trivial.
  Furthermore, \eqref{eq:core.potential.c1} and \eqref{eq:core.potential.c2} follow from triangle inequality.
  By Lemma \ref{thm:nucleon.lp.1.equiv.opt.double.descr}, $x(N\cap P)\geq 1$ for $P\in \mathscr{P}$.
  Recall that $\mathscr{B}$ be a maximum set of $s$-$t$ paths disjoint on $N$.
  Then $\sigma_N =x(N)\geq \sum_{P\in \mathscr{B}} x(N\cap P)\geq \sigma_N$ implies that $x(N\cap P)=1$ for $P\in \mathscr{B}$.
  Hence \eqref{eq:core.potential.c4} follows.
  For $P\in \mathscr{P}$, \eqref{eq:core.potential} implies that
  \begin{equation}\label{eq:potential.path}
    x(N \cap P) \geq \sum_{(u,v)\in N\cap P}\big(\phi(v) - \phi(u)\big) \geq \sum_{(u,v)\in P}\big(\phi(v) - \phi(u)\big) = \phi (t)-\phi (s)=1.
  \end{equation}
  Let $P^* \in \mathscr{B}$.
  Recall that $x(N\cap P^*)=1$.
  Then \eqref{eq:potential.path} implies that $x(e)+\phi_{\boldsymbol{x}} (u) - \phi_{\boldsymbol{x}} (v)=0$ for $e\in N\cap P^*$ and $\phi_{\boldsymbol{x}} (u) - \phi_{\boldsymbol{x}} (v)=0$ for $e\in M\cap P^*$.
\end{proof}

By Lemma \ref{thm:potential.property}, potentials provide an alternative characterization for the core.

\begin{lemma}\label{thm:core.potential}
  $\mathcal{C}(\widetilde{\Gamma}_D)=\{\boldsymbol{x}\in \chi (\widetilde{\Gamma}_{D}): (\boldsymbol{x},\boldsymbol{\phi}_{\boldsymbol{x}}) \text{ satisfies }\eqref{eq:core.potential}\}$.
\end{lemma}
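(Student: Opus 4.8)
The plan is to combine the two characterizations already in hand, so that the lemma becomes an immediate corollary. The inclusion $\mathcal{C}(\widetilde{\Gamma}_D)\subseteq\{\boldsymbol{x}\in\chi(\widetilde{\Gamma}_D):(\boldsymbol{x},\boldsymbol{\phi}_{\boldsymbol{x}})\text{ satisfies }\eqref{eq:core.potential}\}$ is precisely the first assertion of Lemma \ref{thm:potential.property}, so nothing new is required in that direction.

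For the reverse inclusion I would take an allocation $\boldsymbol{x}\in\chi(\widetilde{\Gamma}_D)$ for which $(\boldsymbol{x},\boldsymbol{\phi}_{\boldsymbol{x}})$ satisfies \eqref{eq:core.potential}, and deduce $\boldsymbol{x}\in\mathcal{C}(\widetilde{\Gamma}_D)$ through Lemma \ref{thm:nucleon.lp.1.equiv.opt.double.descr}; that is, it suffices to check $x(N\cap P)\ge 1$ for every $P\in\mathscr{P}$. This is exactly the chain of inequalities \eqref{eq:potential.path}, which uses only the four inequalities of \eqref{eq:core.potential}: constraint \eqref{eq:core.potential.c1} on the private arcs of $P$ to bound $x(N\cap P)$ from below by $\sum_{(u,v)\in N\cap P}(\phi_{\boldsymbol{x}}(v)-\phi_{\boldsymbol{x}}(u))$, constraint \eqref{eq:core.potential.c2} to append (harmlessly) the public arcs of $P$, and \eqref{eq:core.potential.c3}–\eqref{eq:core.potential.c4} to evaluate the resulting telescoping sum over the vertices of $P$ as $\phi_{\boldsymbol{x}}(t)-\phi_{\boldsymbol{x}}(s)=1$. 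Crucially this estimate never invokes core membership of $\boldsymbol{x}$, so it applies verbatim here; Lemma \ref{thm:nucleon.lp.1.equiv.opt.double.descr} then gives $\boldsymbol{x}\in\mathcal{C}(\widetilde{\Gamma}_D)$.

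It is worth recording that the characterization is in fact governed by a single equation: because $\boldsymbol{\phi}_{\boldsymbol{x}}$ is built from shortest-path distances in $D^{\boldsymbol{x}}$ with nonnegative arc lengths, \eqref{eq:core.potential.c1} and \eqref{eq:core.potential.c2} always hold by the triangle inequality and \eqref{eq:core.potential.c3} always holds by construction, so the condition ``$(\boldsymbol{x},\boldsymbol{\phi}_{\boldsymbol{x}})$ satisfies \eqref{eq:core.potential}'' reduces to $\phi_{\boldsymbol{x}}(t)=1$; and since $\phi_{\boldsymbol{x}}(t)=\min_{P\in\mathscr{P}}x(N\cap P)$, this is just a reformulation of the path characterization of Lemma \ref{thm:nucleon.lp.1.equiv.opt.double.descr}, once one also recalls (from the proof of Lemma \ref{thm:potential.property}) that core allocations split evenly along the paths of $\mathscr{B}$. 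I do not anticipate any real obstacle: the substantive work lies in Lemmas \ref{thm:nucleon.lp.1.equiv.opt.double.descr} and \ref{thm:potential.property}, and the only care needed is to cite the estimate \eqref{eq:potential.path} in the form that depends solely on \eqref{eq:core.potential}.
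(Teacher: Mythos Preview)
Your proposal is correct and matches the paper's approach: the paper presents Lemma~\ref{thm:core.potential} as an immediate consequence of Lemma~\ref{thm:potential.property} (stating only ``By Lemma~\ref{thm:potential.property}, potentials provide an alternative characterization for the core'' before the statement), and you have simply spelled out the two inclusions, using Lemma~\ref{thm:potential.property} for one direction and the estimate~\eqref{eq:potential.path} together with Lemma~\ref{thm:nucleon.lp.1.equiv.opt.double.descr} for the other.
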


We conclude this section by showing that every $s$-$t$ path and every cycles can be characterized with core allocations and corresponding potentials.

\begin{lemma}\label{thm:potential.rep.path-cycle}
  For $\boldsymbol{x}\in \mathcal{C}(\widetilde{\Gamma}_D)$ and its potential $\boldsymbol{\phi}_{\boldsymbol{x}}$, the following properties hold.
  \begin{enumerate}[label*=\roman*\emph{)}]
    \item\label{itm:potential.rep.path} For any $s$-$t$ path $P$, $x(N\cap P)=1+\sum_{(u,v)\in J_P} \big(\phi_{\boldsymbol{x}}(u)-\phi_{\boldsymbol{x}}(v)\big)$.
    \item\label{itm:potential.rep.cycle} For any cycle $C$, $x(N\cap C)= \sum_{(u,v)\in J_C} \big(\phi_{\boldsymbol{x}}(u)-\phi_{\boldsymbol{x}}(v)\big)$.
  \end{enumerate}
\end{lemma}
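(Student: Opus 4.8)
The plan is to read off $x(N\cap W)$ from the decomposition \eqref{eq:decomposition} of the given $s$-$t$ path or cycle $W$, summing $x$-values over the private arcs of the subpaths and of the jumps separately. Throughout write $\phi$ for $\boldsymbol{\phi}_{\boldsymbol{x}}$; by Lemma~\ref{thm:potential.property} we already know $\phi(s)=0$, $\phi(t)=1$, that $(\boldsymbol{x},\phi)$ satisfies \eqref{eq:core.potential}, and that $x(N\cap B)=1$ for each $B\in\mathscr{B}$. The first step is a telescoping identity for subpaths: if $R$ is any path from $p$ to $q$ all of whose arcs belong to $\bigcup_{B\in\mathscr{B}}B$, then $x(N\cap R)=\phi(q)-\phi(p)$. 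Indeed, every such arc $e=(u,v)$ is traversed by some member of the maximum set $\mathscr{B}$ of $s$-$t$ paths disjoint on $N$, so the ``moreover'' clause of Lemma~\ref{thm:potential.property} forces equality in \eqref{eq:core.potential.c1} when $e\in N$ and in \eqref{eq:core.potential.c2} when $e\in M$; adding these equalities along $R$ cancels the public arcs and telescopes the private ones to $\phi(q)-\phi(p)$. In particular this applies to every subpath $P_i$ in the decomposition.

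The second step is that every jump $Q$ occurring in \eqref{eq:decomposition} satisfies $x(N\cap Q)=0$. If $Q$ has at least two arcs it has an interior vertex on no path of $\mathscr{B}$, and then no arc of $Q$ lies on a path of $\mathscr{B}$ at all: an arc $(p,q)$ of some $B\in\mathscr{B}$ has both endpoints on $B$, which would force $\{p,q\}$ to be exactly the two endpoints of $Q$ and hence $Q$ to be that single arc. Consequently each private arc $e$ of $Q$ is avoided by every path in $\mathscr{B}$, so $x(e)=0$ by Lemma~\ref{thm:core.allocation.non-veto}, and $x(N\cap Q)=0$. If $Q$ is a single arc $e$, it is either public (and $x(N\cap Q)=0$ trivially) or a private arc off $\bigcup_{B\in\mathscr{B}}B$ (and again $x(e)=0$ by Lemma~\ref{thm:core.allocation.non-veto}); the only remaining possibility, a single private arc sitting on a path of $\mathscr{B}$, is ruled out by choosing the decomposition \eqref{eq:decomposition} so that the subpaths $P_i$ absorb all arcs of $W$ lying in $\bigcup_{B\in\mathscr{B}}B$, which one can always arrange, splitting the complementary stretches at any vertex of $\bigcup_{B\in\mathscr{B}}B$ they pass through so that each surviving stretch is a genuine jump with all arcs off $\bigcup_{B\in\mathscr{B}}B$.

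To conclude, let $W=P_0*Q_1*P_1*\cdots*Q_r*P_r$ with $Q_j$ running from $a_j$ to $b_j$ and $P_i$ from $u_i$ to $v_i$; since $W$ is simple the jump pairs $(a_1,b_1),\dots,(a_r,b_r)$ are pairwise distinct, so $J_W=\{(a_j,b_j):1\le j\le r\}$. By the two steps, $x(N\cap W)=\sum_i x(N\cap P_i)+\sum_j x(N\cap Q_j)=\sum_i\big(\phi(v_i)-\phi(u_i)\big)$. On the other hand the quantity $\sum_i\big(\phi(v_i)-\phi(u_i)\big)+\sum_j\big(\phi(b_j)-\phi(a_j)\big)$ telescopes along $W$ to $\phi(t)-\phi(s)=1$ when $W$ is an $s$-$t$ path and to $0$ when $W$ is a cycle. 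Eliminating $\sum_i\big(\phi(v_i)-\phi(u_i)\big)$ between these two identities gives $x(N\cap P)=1+\sum_{(u,v)\in J_P}\big(\phi(u)-\phi(v)\big)$ for an $s$-$t$ path $P$ and $x(N\cap C)=\sum_{(u,v)\in J_C}\big(\phi(u)-\phi(v)\big)$ for a cycle $C$, which are items \ref{itm:potential.rep.path} and \ref{itm:potential.rep.cycle}.

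The main obstacle is the second step, i.e.\ making sure the jumps carry no private weight. The ``long jump'' case is handed to us by the definition of a jump, but a degenerate one-arc jump lying on a path of $\mathscr{B}$ would introduce a spurious term $\phi(a)-\phi(b)$ into the formula, so such jumps must be excluded by a judicious choice of the decomposition; after that, the non-veto statement of Lemma~\ref{thm:core.allocation.non-veto} does the rest. The subpath telescoping and the final bookkeeping are routine once Lemma~\ref{thm:potential.property} is in hand.
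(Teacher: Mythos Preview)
Your proof is correct and follows essentially the same approach as the paper's: decompose $W$ into subpaths of $\mathscr{B}$ and jumps, use Lemma~\ref{thm:core.allocation.non-veto} to kill the $x$-weight on jumps, use the equality cases of Lemma~\ref{thm:potential.property} to telescope along the subpaths, and then rearrange. The paper's proof simply asserts $x(N\cap Q_i)=0$ directly from Lemma~\ref{thm:core.allocation.non-veto} without the case analysis you supply for one-arc jumps; your extra care in choosing the decomposition so that no jump is a single $\mathscr{B}$-arc is a detail the paper leaves implicit, but it does not change the strategy.
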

\begin{proof}
  Let $W$ be an $s$-$t$ path or a cycle in $D$.
  Recall that $W$ can be decomposed into $P_0 * Q_1 * P_1 * \cdots * Q_r * P_r$, where $P_i$ is a $v_i$-$u_{i+1}$ subpath of paths from $\mathscr{B}$ and $Q_i$ is a $u_i$-$v_i$ jump from $\mathscr{J}$.
  Note that $v_0=s$ and $u_{r+1}=t$ if $W$ is an $s$-$t$ path and $v_0=u_{r+1}$ if $W$ is a cycle.
  By Lemma \ref{thm:core.allocation.non-veto}, $x(N\cap Q_i)=0$.
  It follows from Lemma \ref{thm:potential.property} that
  \begin{equation*}
    x(N\cap W) = \sum_{i=1}^{r} x(N\cap P_i) = \sum_{i=0}^{r} \big(\phi_{\boldsymbol{x}}(u_{i+1})-\phi_{\boldsymbol{x}}(v_{i})\big) = \phi_{\boldsymbol{x}}(u_{r+1}) - \phi_{\boldsymbol{x}}(v_0) +\sum_{i=1}^{r}\big(\phi_{\boldsymbol{x}}(u_i)-\phi_{\boldsymbol{x}}(v_i)\big).
  \end{equation*}
  Hence $x(N\cap W)=1+\sum_{i=1}^{r}\big(\phi_{\boldsymbol{x}}(u_i)-\phi_{\boldsymbol{x}}(v_i)\big)$ if $W$ is an $s$-$t$ path and $x(N\cap W)=\sum_{i=1}^{r}\big(\phi_{\boldsymbol{x}}(u_i)-\phi_{\boldsymbol{x}}(v_i)\big)$ if $W$ is a cycle.
\end{proof}

\iffalse
\begin{lemma}\label{thm:approx.core.auxiliary.variable}
  $\boldsymbol{x}\in \mathcal{C}(\widetilde{\Gamma}_D)$ if and only if there exists a vector $\boldsymbol{\phi}\in [0,1]^V$ for $\boldsymbol{x}\in \chi (\widetilde{\Gamma}_{D})$ such that $(\boldsymbol{x},\boldsymbol{\phi})$ satisfies constraint \eqref{eq:core.ext.sys}.
\end{lemma}
\fi
\iffalse
\vspace{-20pt}
\begin{IEEEeqnarray}{rCl'l}
  \subnumberinglabel{eq:core.ext.sys}
  x(e) +\phi(u) -\phi(v) & \geq & c(e), & \forall e=(u,v)\in N, \label{eq:core.ext.sys.c1} \\*[2pt]
  \phi(u) - \phi(v) & \geq & c(e), & \forall e=(u,v)\in M, \label{eq:core.ext.sys.c2}\\*[2pt]
  \phi(s) & = & \phi(t), \label{eq:core.ext.sys.c2}\\*[2pt]
  \boldsymbol{x} & \in & \chi (\widetilde{\Gamma}_{D}). \label{eq:core.ext.sys.c0}&
\end{IEEEeqnarray}

\begin{proof}
Let $(\boldsymbol{x},\boldsymbol{\phi})$ with $\boldsymbol{x}\in \chi (\widetilde{\Gamma}_{D})$ be a vector satisfying \eqref{eq:core.potential}.
We show that $\boldsymbol{x}\in \mathcal{C}(\widetilde{\Gamma}_D)$.
Let $P\in \mathscr{P}$.
By Theorem \ref{thm:nucleon.lp.1.equiv.opt.double.descr}, it suffices to show that $x(N\cap P)\geq 1$.
Indeed,
\begin{equation*}
  x(N \cap P) \geq \sum_{(u,v)\in N\cap P}\big(\phi(v) - \phi(u)\big) \geq \sum_{(u,v)\in P}\big(\phi(v) - \phi(u)\big) = \phi (t)-\phi (s)=1.
\end{equation*}
It follows that $\boldsymbol{x}\in \mathcal{C}(\widetilde{\Gamma}_D)$.
\end{proof}
\fi

\section{The nucleon of $\widetilde{\Gamma}_D$}
\label{sec:nucleon.auxiliary}

In this section, we show that the nucleon of $\widetilde{\Gamma}_D$ can be computed efficiently.
The nucleon of $\widetilde{\Gamma}_D$ can be computed by recursively solving a sequence of linear programs as specified in \eqref{eq:auxiliary.nucleon.lp}.
\begin{maxi!}|s|
  {}{\epsilon \label{eq:auxiliary.nucleon.lp.obj}}
  {\label{eq:auxiliary.nucleon.lp}}{\widetilde{LP}_{k+1}:\quad}
  \addConstraint {x(S)}{\geq (1+\epsilon) \widetilde{\gamma} (S),}{\quad \forall S\in 2^N\backslash  \textnormal{fix}(\widetilde{X}_{k})}{\label{eq:auxiliary.nucleon.lp.c1}}
  \addConstraint {\boldsymbol{x}}{\in \widetilde{X}_k .}{}{\label{eq:auxiliary.nucleon.lp.c2}}
\end{maxi!}
Initially, $\epsilon_0 =0$ and $\widetilde{X}_0=\chi (\widetilde{\Gamma})$.
After solving $\widetilde{LP}_{k+1}$, let ${\epsilon}_{k+1}$ be the optimum and $\widetilde{X}_{k+1}$ be the set of vectors $\boldsymbol{x}_{k+1}$ such that $(\boldsymbol{x}_{k+1},{\epsilon}_{k+1})$ is an optimal solution.
By Lemma \ref{thm:nucleon.lp.1.equiv.opt.double.descr}, we have the following results for $\widetilde{LP}_1$.

\begin{lemma}\label{thm:auxiliary.nucleon.lp.1}
${\epsilon}_1 =0$ and $\widetilde{X}_1 =\mathcal{C}(\widetilde{\Gamma}_D)$.
\end{lemma}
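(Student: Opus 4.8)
The plan is to unwind the definition of $\widetilde{LP}_1$ and analyze its feasible region directly. Since $\widetilde{X}_0=\chi(\widetilde{\Gamma}_D)$ and $\textnormal{fix}(\widetilde{X}_0)=\{\emptyset,N\}$, the program $\widetilde{LP}_1$ is: maximize $\epsilon$ subject to $\boldsymbol{x}\in\chi(\widetilde{\Gamma}_D)$ and $x(S)\ge(1+\epsilon)\widetilde{\gamma}(S)$ for every $S\in 2^N\setminus\{\emptyset,N\}$. I would first pin down the optimum $\epsilon_1=0$, and then read off $\widetilde{X}_1$ from the $\epsilon=0$ slice of the feasible region.

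For the lower bound $\epsilon_1\ge 0$: Lemma~\ref{thm:nucleon.lp.1.equiv.opt.double.descr} gives $\mathcal{C}(\widetilde{\Gamma}_D)\neq\emptyset$, and any $\boldsymbol{x}\in\mathcal{C}(\widetilde{\Gamma}_D)$ satisfies $x(S)\ge\widetilde{\gamma}(S)$ for all $S\subseteq N$, so $(\boldsymbol{x},0)$ is feasible for $\widetilde{LP}_1$. For the upper bound $\epsilon_1\le 0$: take any feasible $(\boldsymbol{x},\epsilon)$ and sum the constraints over the fixed maximum family $\mathscr{B}$ of $s$-$t$ paths disjoint on $N$. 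The sets $N\cap P$, $P\in\mathscr{B}$, are pairwise disjoint (disjointness on $N$), non-empty (Assumption~\ref{assumption:1}), and proper subsets of $N$ (since $\sigma_N\ge 2$), so each constraint $x(N\cap P)\ge(1+\epsilon)\widetilde{\gamma}(N\cap P)$ is present in $\widetilde{LP}_1$; moreover $P$ is an $s$-$t$ path of $D_{N\cap P}$, hence $\widetilde{\gamma}(N\cap P)=\gamma(N\cap P)\ge 1$. Using $x(N)=\widetilde{\gamma}(N)=\sigma_N=\lvert\mathscr{B}\rvert$ together with disjointness,
\[
  \sigma_N = x(N) \ge \sum_{P\in\mathscr{B}} x(N\cap P) \ge (1+\epsilon)\sum_{P\in\mathscr{B}}\widetilde{\gamma}(N\cap P) \ge (1+\epsilon)\,\sigma_N,
\]
which forces $\epsilon\le 0$. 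Hence $\epsilon_1=0$.

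Finally, with $\epsilon_1=0$ the set $\widetilde{X}_1$ consists of all $\boldsymbol{x}\in\chi(\widetilde{\Gamma}_D)$ with $x(S)\ge\widetilde{\gamma}(S)$ for every $S\in 2^N\setminus\{\emptyset,N\}$; since every allocation already satisfies $x(N)=\widetilde{\gamma}(N)$ and $x(\emptyset)=0=\widetilde{\gamma}(\emptyset)$, this set equals $\{\boldsymbol{x}\in\chi(\widetilde{\Gamma}_D):x(S)\ge\widetilde{\gamma}(S),\ \forall S\subseteq N\}=\mathcal{C}(\widetilde{\Gamma}_D)$. Everything here is routine bookkeeping once Lemma~\ref{thm:nucleon.lp.1.equiv.opt.double.descr} is in hand; the one step that genuinely uses the flow structure, and the place I would be most careful, is the counting inequality establishing $\epsilon_1\le 0$, which relies on the existence of $\sigma_N$ paths disjoint on $N$, each of whose $N$-part is a proper non-empty coalition of value at least one.
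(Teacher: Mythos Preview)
Your proof is correct and is precisely the argument the paper leaves implicit when it says ``By Lemma~\ref{thm:nucleon.lp.1.equiv.opt.double.descr}, we have the following results for $\widetilde{LP}_1$'' without further detail. The one place where you supply genuinely new work beyond citing Lemma~\ref{thm:nucleon.lp.1.equiv.opt.double.descr} is the upper bound $\epsilon_1\le 0$: since $S=N$ is excluded from the constraints of $\widetilde{LP}_1$, this does not come for free, and your summation over $\mathscr{B}$ (using disjointness on $N$, Assumption~\ref{assumption:1}, and the standing hypothesis $\sigma_N\ge 2$ to ensure each $N\cap P$ is a proper non-empty coalition) is exactly the right device.
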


%Since the incidence vector of every minimum $s$-$t$ cut constrained to $N$ belongs to $\widetilde{X}_1$, an optimal solution $(\epsilon_1,\boldsymbol{x}_1)$ to $\widetilde{LP}_1$ can be computed in polynomial time.
%Moreover, we have the following result for $\widetilde{X}_1$.

\begin{lemma}\label{thm:separation.oracle.opt.1}
  The separation problem of $\widetilde{X}_1$ can be solved in polynomial time. 
  %Specifically, given a vector $\boldsymbol{x}\in \widetilde{X}_0$, the separation oracle either asserts that $\boldsymbol{x}\in \widetilde{X}_1$ or returns a constraint of $\widetilde{LP}_1$ violated by $\boldsymbol{x}$ in polynomial time.
\end{lemma}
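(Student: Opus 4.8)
The plan is to reduce the separation problem for $\widetilde{X}_1$ to a single shortest path computation, using the characterization $\widetilde{X}_1 = \mathcal{C}(\widetilde{\Gamma}_D) = \{\boldsymbol{x} \in \chi(\widetilde{\Gamma}_D): x(N \cap P) \geq 1,\ \forall P \in \mathscr{P}\}$ supplied by Lemmas \ref{thm:auxiliary.nucleon.lp.1} and \ref{thm:nucleon.lp.1.equiv.opt.double.descr}. Given a query vector $\boldsymbol{x} \in \mathbb{R}^N$, I would first check the ``easy'' part of membership in $\chi(\widetilde{\Gamma}_D)$: whether $\boldsymbol{x} \geq \boldsymbol{0}$ and whether $x(N) = \tilde{\gamma}(N) = \sigma_N$. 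Both tests run in $O(\lvert N\rvert)$ time, and a violation of either directly yields a separating hyperplane (a coordinate inequality $x_i \geq 0$, or the equality $x(N) = \sigma_N$).

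The only nontrivial family is $\{\,x(N \cap P) \geq 1 : P \in \mathscr{P}\,\}$, which is exponentially large. To handle it I would lift $\boldsymbol{x}$ to arc lengths on $D$ exactly as in the auxiliary graph $D^{\boldsymbol{x}}$ introduced before Lemma \ref{thm:potential.property}: assign length $x(e)$ to $e \in N$ and length $0$ to $e \in M$. Then for every $s$-$t$ path $P$ the $D^{\boldsymbol{x}}$-length of $P$ equals $x(N \cap P)$, so $\min_{P \in \mathscr{P}} x(N \cap P)$ is precisely the shortest $s$-$t$ distance $\phi_{\boldsymbol{x}}(t)$. Since the test $\boldsymbol{x} \geq \boldsymbol{0}$ has already been passed, all arc lengths are nonnegative, and Dijkstra's algorithm computes $\phi_{\boldsymbol{x}}(t)$ together with a shortest $s$-$t$ path $P^\ast$ in polynomial time. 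If $\phi_{\boldsymbol{x}}(t) \geq 1$, then every path constraint holds and $\boldsymbol{x} \in \widetilde{X}_1$; otherwise $P^\ast$ certifies the violated inequality $x(N \cap P^\ast) < 1$, which we return.

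I do not expect a genuine obstacle here; the content is just the observation that the path constraints are ``shortest-path separable'' once $\boldsymbol{x}$ is read as a length function, and that nonnegativity of $\boldsymbol{x}$ — itself part of the description of $\chi(\widetilde{\Gamma}_D)$ — is exactly what licenses Dijkstra and ensures the minimum is attained by an actual (simple) $s$-$t$ path rather than being dragged down along a cycle. The only points to state carefully are the identity between $x(N \cap P)$ and the $D^{\boldsymbol{x}}$-length of $P$, which is immediate from the choice of arc lengths, and the fact that under Assumption \ref{assumption:1} every $s$-$t$ path meets $N$, so $N \cap P \neq \emptyset$ and these constraints are not vacuous.
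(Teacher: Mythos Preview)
Your proposal is correct and follows essentially the same approach as the paper: reduce the path constraints to a shortest $s$-$t$ path computation in the auxiliary graph $D^{\boldsymbol{x}}$ with lengths $x(e)$ on private arcs and $0$ on public arcs, invoking Lemmas~\ref{thm:nucleon.lp.1.equiv.opt.double.descr} and~\ref{thm:auxiliary.nucleon.lp.1}. You are slightly more explicit than the paper in first checking the allocation constraints $\boldsymbol{x}\geq \boldsymbol{0}$ and $x(N)=\sigma_N$ and in naming Dijkstra, but the argument is the same.
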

\begin{proof}
  For $\boldsymbol{x}\in \widetilde{X}_0$, construct an auxiliary graph $D^{\boldsymbol{x}}$ from $D$ where the arc length is $x(e)$ if $e\in N$ and $0$ otherwise.
  By Lemmas \ref{thm:nucleon.lp.1.equiv.opt.double.descr} and \ref{thm:auxiliary.nucleon.lp.1}, $\boldsymbol{x}\in \widetilde{X}_1$ if the shortest $s$-$t$ path in $D^{\boldsymbol{x}}$ has length no less than $1$, and a violated constraint of $\widetilde{X}_1$ is determined by the shortest $s$-$t$ path in $D^{\boldsymbol{x}}$ otherwise.
\end{proof}

Now assume that $k\geq 1$, and the separation problem of $\widetilde{X}_k$ can be solved in polynomial time.
We employ induction to prove that the separation problem of $\widetilde{X}_{k+1}$ can be solved in polynomial time.
For $S\in 2^N \backslash \{\emptyset,N\}$, let $\mathcal{P}_S$ denote a maximum set of disjoint $s$-$t$ paths in $D_S$ and $P_S$ denote the arc set of paths from $\mathcal{P}_S$.
By Lemma \ref{thm:auxiliary.nucleon.lp.1}, $\widetilde{X}_k\subseteq \mathcal{C}(\widetilde{\Gamma}_D)$.
It follows from Lemma \ref{thm:potential.rep.path-cycle} that for any $\boldsymbol{x}\in \widetilde{X}_k$ and its potential $\boldsymbol{\phi}_{\boldsymbol{x}}$, we have
\begin{equation}\label{eq:auxiliary.nucleon.lp.equiv.c1.alt}
  x(S)-\widetilde{\gamma}(S)= x(S\backslash P_S)+\sum_{P\in \mathcal{P}_S}\sum_{(u,v)\in J_P} \big(\phi_{\boldsymbol{x}}(u)-\phi_{\boldsymbol{x}}(v)\big).
\end{equation}
Moreover, $x(S)-\widetilde{\gamma}(S)$ is determined by the value $x(e)$ for $e\in S\backslash P_S$ and the potential difference $\phi_{\boldsymbol{x}}(u)-\phi_{\boldsymbol{x}}(v)$ for $(u,v)\in \cup_{P\in \mathcal{P}_S} J_P$.
In particular, if $S\in 2^N \backslash \text{fix}(\widetilde{X}_k)$, there exist either unfixed private arcs in $S\backslash P_S$ or unfixed jump pairs in $\cup_{P\in \mathcal{P}_S} J_P$.
Before proceeding, we introduce $N_k$ and $J_k$ to denote the set of private arcs and jump pairs that are fixed in $\widetilde{X}_{k}$, respectively.
\begin{itemize}
  \item Let $N_{k}$ be the set of arcs from $N$ that are fixed in $\widetilde{X}_{k}$, i.e., $N_{k}=\{e\in N: x' (e) = x'' (e), \forall \boldsymbol{x}',\boldsymbol{x}''\in \widetilde{X}_{k}\}$.
  Clearly, $N_k \subseteq N_{k+1}$.
  \item Let $J_{k}$ be the set of jump pairs $(u,v)$ from $J$ whose potential difference are fixed in $\widetilde{X}_{k}$, i.e., $J_k=\{(u,v)\in J:\phi_{\boldsymbol{x}'} (u) -\phi_{\boldsymbol{x}'} (v)=\phi_{\boldsymbol{x}''} (u) - \phi_{\boldsymbol{x}''} (v), \forall \boldsymbol{x}',\boldsymbol{x}''\in \widetilde{X}_{k}\}$.
  Clearly, $J_{k} \subseteq J_{k+1}$.
\end{itemize}

\begin{lemma}\label{thm:fixed.poly.solvable}
  Both $N_k$ and $J_k$ can be determined in polynomial time.
\end{lemma}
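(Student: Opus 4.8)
The plan is to reduce the determination of $N_k$ and of $J_k$ to testing whether finitely many \emph{linear} functionals are constant on the polytope $\widetilde{X}_k$, and to carry out each such test using the separation oracle for $\widetilde{X}_k$ that is available by the inductive hypothesis. The basic tool is the following. The set $\widetilde{X}_k$ is a nonempty polytope (it is the optimal face of $\widetilde{LP}_k$, which is feasible and bounded) and it is bounded, since it lies in $\chi(\widetilde{\Gamma}_D)\subseteq\{\boldsymbol{x}\in\mathbb{R}^{N}_{\geq 0}:x(N)=\sigma_N\}$; moreover, as $\widetilde{X}_0=\chi(\widetilde{\Gamma}_D)$ has a short explicit description and each $\widetilde{X}_{k+1}$ is cut out of $\widetilde{X}_k$ by the optimal-value constraints of $\widetilde{LP}_{k+1}$, it remains a well-described polyhedron (bounded, with polynomially bounded facet complexity) throughout the recursion. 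Hence, by the polynomial equivalence of separation and optimization \cite{Schr03}, any linear objective $\ell$ can be maximized and minimized over $\widetilde{X}_k$ in polynomial time, so we can decide in polynomial time whether $\ell$ is constant on $\widetilde{X}_k$: it is constant if and only if $\max_{\boldsymbol{x}\in\widetilde{X}_k}\ell(\boldsymbol{x})=\min_{\boldsymbol{x}\in\widetilde{X}_k}\ell(\boldsymbol{x})$.

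For $N_k$ this already suffices: for each $e\in N$ apply the test to the linear functional $\boldsymbol{x}\mapsto x(e)$; then $e\in N_k$ exactly when the test succeeds. This costs $\lvert N\rvert$ linear optimizations, hence polynomial time.

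For $J_k$ the subtlety is that $\phi_{\boldsymbol{x}}(u)-\phi_{\boldsymbol{x}}(v)$ is a priori only a piecewise-linear (shortest-path) function of $\boldsymbol{x}$. The key point is that it becomes an explicit linear functional once we restrict to core allocations, and $\widetilde{X}_k\subseteq\widetilde{X}_1=\mathcal{C}(\widetilde{\Gamma}_D)$ by Lemma \ref{thm:auxiliary.nucleon.lp.1} together with $\widetilde{X}_{k+1}\subseteq\widetilde{X}_k$. Concretely, for each jump pair $(u,v)\in J$, the constructive proof of Lemma \ref{thm:jump.uniqueness} yields in polynomial time (once $\mathscr{B}$ and $\mathscr{J}$ are computed) either an $s$-$t$ path $P$ with $J_P=\{(u,v)\}$ or a cycle $C$ with $J_C=\{(u,v)\}$; Lemma \ref{thm:potential.rep.path-cycle} then gives, for every $\boldsymbol{x}\in\widetilde{X}_k$ with potential $\boldsymbol{\phi}_{\boldsymbol{x}}$, the identity $\phi_{\boldsymbol{x}}(u)-\phi_{\boldsymbol{x}}(v)=x(N\cap P)-1$ in the first case and $\phi_{\boldsymbol{x}}(u)-\phi_{\boldsymbol{x}}(v)=x(N\cap C)$ in the second, the right-hand side being a fixed linear functional $\ell_{(u,v)}$ of $\boldsymbol{x}$ (independent of the chosen representative, again by Lemma \ref{thm:potential.rep.path-cycle}). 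Thus $(u,v)\in J_k$ if and only if $\ell_{(u,v)}$ is constant on $\widetilde{X}_k$, which we decide as above; since $\lvert J\rvert=O(\lvert V\rvert^2)$, all of $J_k$ is obtained in polynomial time.

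The crux, and the only genuine obstacle, is this linearization step: the raw requirement ``the potential difference $\phi_{\boldsymbol{x}}(u)-\phi_{\boldsymbol{x}}(v)$ takes the same value for all $\boldsymbol{x}\in\widetilde{X}_k$'' concerns a nonlinear function and is not obviously testable by linear-programming machinery, and it is precisely Lemmas \ref{thm:jump.composition}--\ref{thm:potential.rep.path-cycle} that render these potential differences affine in $\boldsymbol{x}$ over $\mathcal{C}(\widetilde{\Gamma}_D)$, hence over every $\widetilde{X}_k$ in the recursion. The remaining, routine matter is the bookkeeping needed to invoke the ellipsoid method, namely that $\widetilde{X}_k$ stays a well-described polyhedron at each level; this follows from the explicit form of $\widetilde{X}_0$ and the fact that passing from $\widetilde{X}_k$ to $\widetilde{X}_{k+1}$ only appends the optimal-value constraints of $\widetilde{LP}_{k+1}$.
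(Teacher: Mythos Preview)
Your proof is correct and follows the same approach as the paper: for each $e\in N$ and each $(u,v)\in J$, test constancy over $\widetilde{X}_k$ by comparing maximum and minimum, invoking the inductive separation-oracle hypothesis to optimize. In fact you are more explicit than the paper on one point the paper glosses over---namely, why $\phi_{\boldsymbol{x}}(u)-\phi_{\boldsymbol{x}}(v)$ can be optimized as a linear functional---which you justify cleanly via Lemmas~\ref{thm:jump.uniqueness} and~\ref{thm:potential.rep.path-cycle}.
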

\begin{proof}
  For $e\in N$, we have $e\in N_k$ if $\max \{x(e):\boldsymbol{x}\in \widetilde{X}_{k}\}=\min \{x(e):\boldsymbol{x}\in \widetilde{X}_{k}\}$.
  For $(u,v)\in J$, we have $(u,v)\in J_k$ if $\max\{\phi_{\boldsymbol{x}} (u)-\phi_{\boldsymbol{x}} (v): \boldsymbol{x}\in \widetilde{X}_k\} = \min\{\phi_{\boldsymbol{x}} (u)-\phi_{\boldsymbol{x}} (v): \boldsymbol{x}\in \widetilde{X}_k\}$.
  By induction hypothesis,  the separation problem of $\widetilde{X}_k$ can be solved in polynomial-time.
  Hence both $N_k$ and $J_k$ can be determined in polynomial time.
\end{proof}

In order to examine the critical conditions at which the constraint \eqref{eq:auxiliary.nucleon.lp.equiv.c1.alt} goes fixed, we introduce an auxiliary network $D_k=(V,E_k;s,t)$ for $\widetilde{LP}_{k+1}$.
The network $D_k$ is obtained from $D=(V,E;s,t)$ by removing all arcs from $u$-$v$ jumps in $\mathscr{J}$ for $(u,v)\not\in J_k$.
Notice that for the arc set $F$ of an arbitrary disjoint $s$-$t$ paths in $D_k$, $N\cap F\in \mbox{fix}(\widetilde{X}_k)$ follows from \eqref{eq:auxiliary.nucleon.lp.equiv.c1.alt}.
In the following, we identify two important classes of unfixed coalitions in $\widetilde{X}_k$.
 
\begin{itemize}
  \item For unfixed private arc $e\in N\backslash N_k$, let $\mathcal{F}^{-}_k (e)$ be the collection of arc sets of every possible disjoint $s$-$t$ paths in $D_k -e$, and let $\mathcal{O}^e_k =\{(N\cap F)\cup \{e\}:F\in \mathcal{F}^{-}_k (e)\}$.
  For $\boldsymbol{x}', \boldsymbol{x}''\in \widetilde{X}_{k}$ and $O\in \mathcal{O}^e_k$, we have $x'(O)-x'(e) = x''(O)-x''(e)$, implying $O \backslash \{e\}\in \text{fix}(\widetilde{X}_k)$.
  Hence $\mathcal{O}^e_k$ consists of critical unfixed coalitions determined by $e\in N\backslash N_k$.
  \item For unfixed jump pair $(u,v)\in J \backslash J_k$, let $\mathcal{F}^{+}_k (u,v)$ be the collection of arc sets of every possible disjoint $s$-$t$ paths with one path containing $(u,v)$ in $D_k +(u,v)$, and let $\mathcal{R}^{(u,v)}_k =\{N\cap F:F\in \mathcal{F}^{+}_k (u,v)\}$.
  For $\boldsymbol{x}', \boldsymbol{x}''\in \widetilde{X}_{k}$ and $R\in \mathcal{R}^{(u,v)}_k$, we have $x'(R)-\big(\phi_{\boldsymbol{x}'}(u)-\phi_{\boldsymbol{x}'}(v)\big) = x''(R)-\big(\phi_{\boldsymbol{x}''}(u)-\phi_{\boldsymbol{x}''}(v)\big)$.
  Hence $\mathcal{R}^{(u,v)}_k$ consists of critical unfixed coalitions determined by $(u,v)\in J \backslash J_k$.
\end{itemize}

Now we introduce a linear program to solve the separation problem of $\widetilde{LP}_{k+1}$ defined in \eqref{eq:auxiliary.nucleon.lp}.
For $\boldsymbol{x}\in \widetilde{X}_k$, let  $\widetilde{LP}_{k+1}^{\boldsymbol{x}}$ be an associated linear program as specified in \eqref{eq:separation.oracle.opt.k+1}.
\begin{maxi!}|s|
  {}{\epsilon \label{eq:separation.oracle.opt.k+1.obj}}
  {\label{eq:separation.oracle.opt.k+1}}{\widetilde{LP}^{\boldsymbol{x}}_{k+1}:\quad}
  \addConstraint {x(O^{\boldsymbol{x}}_k)}{\geq (1+\epsilon) \widetilde{\gamma}(O^{\boldsymbol{x}}_k)}{}{\label{eq:separation.oracle.opt.k+1.c1}}
  \addConstraint {x(R^{\boldsymbol{x}}_k)}{\geq (1+\epsilon) \widetilde{\gamma}(R^{\boldsymbol{x}}_k),}{}{\label{eq:separation.oracle.opt.k+1.c2}}
  %\addConstraint {\boldsymbol{x}}{\in \widetilde{X}_k .}{}{\label{eq:separation.oracle.opt.k+1.c3}}
\end{maxi!}
where 
%$O_k^{\boldsymbol{x}}=\mathop{\arg\min}_{O\in\mathcal{O}^e_k} \min_{e\in N\backslash N_k} \text{re}_{\boldsymbol{x}} (O)$ and $R_k^{\boldsymbol{x}}=\mathop{\arg\min}_{R\in\mathcal{R}^{(u,v)}_k} \min_{(u,v)\in J\backslash J_k} \text{re}_{\boldsymbol{x}} (R)$.
\begin{equation*}
  O_k^{\boldsymbol{x}}=\mathop{\arg\min}_{O\in\mathcal{O}^e_k} \min_{e\in N\backslash N_k} \text{re}_{\boldsymbol{x}} (O),
\end{equation*}
and
\begin{equation*}
  R_k^{\boldsymbol{x}}=\mathop{\arg\min}_{R\in\mathcal{R}^{(u,v)}_k} \min_{(u,v)\in J\backslash J_k} \text{re}_{\boldsymbol{x}} (R).
\end{equation*}

\begin{lemma}
  For $\boldsymbol{x}\in \widetilde{X}_k$, both $O_k^{\boldsymbol{x}}$ and $R_k^{\boldsymbol{x}}$ can be determined in polynomial time.
\end{lemma}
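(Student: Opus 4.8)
The plan is to turn each of the two inner minimizations into a minimum-cost maximum-flow computation on a network derived from $D_k$, and to enumerate on the outside over the unfixed private arcs $e\in N\setminus N_k$ (at most $\lvert N\rvert$ of them) and the unfixed jump pairs $(u,v)\in J\setminus J_k$ (at most $O(\lvert V\rvert^{2})$ of them). First I would compute the potential $\boldsymbol{\phi}_{\boldsymbol{x}}$ of the given $\boldsymbol{x}\in\widetilde{X}_k\subseteq\mathcal{C}(\widetilde{\Gamma}_D)$, which takes polynomial time as noted after its definition.

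Consider $O^{\boldsymbol{x}}_k$ first. Fix an unfixed private arc $e\in N\setminus N_k$ and write $O=(N\cap F)\cup\{e\}$ for $F\in\mathcal{F}^{-}_k(e)$, so $e\notin F$ and $F$ is a maximum family of arc-disjoint $s$-$t$ paths in $D_k-e$. The crucial preliminary step is to show that $\widetilde{\gamma}(O)$ takes one common value $d_e$ over all $O\in\mathcal{O}^e_k$, equal to $\lvert F\rvert$; that is, adding $e$ back to the arc set of such an $F$ creates no larger family of arc-disjoint $s$-$t$ paths. Granting this, apply \eqref{eq:auxiliary.nucleon.lp.equiv.c1.alt} to $S=O$ with $\mathcal{P}_O=F$, and use Lemma \ref{thm:potential.rep.path-cycle} to rewrite $\sum_{P\in F}\sum_{(u,v)\in J_P}\big(\phi_{\boldsymbol{x}}(u)-\phi_{\boldsymbol{x}}(v)\big)=x(N\cap F)-\lvert F\rvert$; since the arcs of $O$ outside $F$ are exactly $\{e\}$, this yields
\[
  \text{re}_{\boldsymbol{x}}(O)=\frac{x(O)-\widetilde{\gamma}(O)}{\widetilde{\gamma}(O)}=\frac{x(N\cap F)+x(e)}{d_e}-1 .
\]
As $e$, hence $x(e)$ and $d_e$, are fixed, minimizing $\text{re}_{\boldsymbol{x}}(O)$ over $O\in\mathcal{O}^e_k$ is exactly the problem of finding a maximum family $F$ of arc-disjoint $s$-$t$ paths in $D_k-e$ minimizing $x(N\cap F)$, i.e.\ a minimum-cost maximum flow on $D_k-e$ with unit arc capacities, cost $x(a)$ on each $a\in N$ and cost $0$ on each $a\in M$. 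All costs are nonnegative, so this is solved in polynomial time; running it for every $e\in N\setminus N_k$ and keeping the cheapest pair $(e,F)$ produces $O^{\boldsymbol{x}}_k$.

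The computation of $R^{\boldsymbol{x}}_k$ is symmetric. For a fixed $(u,v)\in J\setminus J_k$, one first shows that $\widetilde{\gamma}(R)$ equals a common value $d_{(u,v)}$ over all $R\in\mathcal{R}^{(u,v)}_k$ --- namely the maximum number of arc-disjoint $s$-$t$ paths in $D_k+(u,v)$ one of which uses a $u$-$v$ jump --- and then, writing $R=N\cap F$ with $F\in\mathcal{F}^{+}_k(u,v)$ and applying \eqref{eq:auxiliary.nucleon.lp.equiv.c1.alt} and Lemma \ref{thm:potential.rep.path-cycle} exactly as above, one obtains $\text{re}_{\boldsymbol{x}}(R)=x(N\cap F)/d_{(u,v)}-1$. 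Thus we must minimize $x(N\cap F)$ over the maximum families $F$ of arc-disjoint $s$-$t$ paths in $D_k+(u,v)$ one of which uses a $u$-$v$ jump; this is a minimum-cost maximum-flow problem on $D_k+(u,v)$ in which one unit of flow is forced to traverse a $u$-$v$ jump (so that the minimum-cost solution automatically selects the cheapest one), again solvable in polynomial time. Enumerating over $(u,v)\in J\setminus J_k$ and taking the minimum yields $R^{\boldsymbol{x}}_k$.

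The main obstacle is the two ``common value'' claims --- that $\widetilde{\gamma}(O)$ is constant over $\mathcal{O}^e_k$ and $\widetilde{\gamma}(R)$ is constant over $\mathcal{R}^{(u,v)}_k$ --- because it is precisely this that collapses the a priori fractional quantity $\text{re}_{\boldsymbol{x}}(\cdot)$ into a linear objective over disjoint-path families; I expect the argument to run through the decomposition \eqref{eq:auxiliary.nucleon.lp.equiv.c1.alt} and Lemma \ref{thm:potential.rep.path-cycle}, exploiting that $D_k$ was obtained from $D$ by deleting the arcs of precisely the unfixed jumps. Even were such constancy to fail, one still finishes in polynomial time by splitting the minimization according to the (at most two) possible values of $\widetilde{\gamma}(\cdot)$ and running the min-cost-flow subroutine in each case; the remaining bookkeeping --- the forced-arc variant and the ``maximum flow first, minimum cost second'' convention --- is routine.
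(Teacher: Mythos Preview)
Your plan rests on a misreading of the definitions of $\mathcal{F}^-_k(e)$ and $\mathcal{F}^+_k(u,v)$. These collections contain the arc sets of \emph{every} family of disjoint $s$-$t$ paths in $D_k-e$ (respectively, in $D_k+(u,v)$ with one path through $(u,v)$), not only the maximum-cardinality ones; the paper's own proof makes this explicit by working with $F^-_\tau$ for each cardinality $\tau$ separately. Consequently $\widetilde{\gamma}(O)$ genuinely ranges over all integers from $1$ up to (at most) $\sigma_E$ as $O$ varies over $\mathcal{O}^e_k$, so your ``common value $d_e$'' claim fails by construction, and the ``at most two values'' fallback is wrong for the same reason.

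The paper's argument is exactly your fallback idea carried out with the correct bound and without any constancy claim: for each fixed $e$ and each $\tau$, the minimizer of $\text{re}_{\boldsymbol{x}}(O)$ over $\{O\in\mathcal{O}^e_k:\widetilde{\gamma}(O)=\tau\}$ comes from $\tau$ disjoint $s$-$t$ paths of minimum total $x$-weight in $D^{\boldsymbol{x}}_k-e$, i.e.\ a min-cost flow of value $\tau$; one then takes the best $\tau\in\{1,\ldots,\sigma_E\}$. The computation of $R^{\boldsymbol{x}}_k$ is handled symmetrically, with the extra device of assigning the new arc $(u,v)$ weight $-\sigma_N-1$ so that every min-cost solution is forced to route one unit through it. Thus the min-cost-flow machinery you invoke is the right tool; you simply need at most $\sigma_E$ calls per unfixed arc or jump pair rather than one or two, and the constancy lemma you flag as the ``main obstacle'' is neither true nor needed.
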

\begin{proof}
  Let $\boldsymbol{x}\in \widetilde{X}_k$ and $D^{\boldsymbol{x}}_k$ be a weighted graph obtained from $D_k$ where the arc weight is $x(e)$ if $e\in N\cap E_k$ and $0$ otherwise.

  Let $e\in N\backslash N_k$.
  Denote by $F^-_{\tau}$ the arc set of $\tau$ disjoint $s$-$t$ paths with minimum weight in $D^{\boldsymbol{x}}_k -e$.
  Note that $F^-_{\tau}$ can be determined efficiently as a minimum cost flow problem \cite{Schr03}.
  Moreover, $N\cap (F^-_{\tau} \cup \{e\})=\mathop{\arg\min}_{O\in \mathcal{O}^e_k} \text{re}_{\boldsymbol{x}} (O)$ for $\tilde{\gamma}(O)=\tau$.
  Hence $\mathop{\arg\min}_{O\in \mathcal{O}^e_k} \text{re}_{\boldsymbol{x}} (O)$ can be computed by solving at most $\sigma_E$ minimum cost flow problems.
  It follows that $O_k^{\boldsymbol{x}}$ can be computed in polynomial time.

  Let $(u,v)\in J\backslash J_k$ and $Q_{u-v}$ be the a $u$-$v$ jump.
  Let $F^+_{\tau}$ be the arc set of $\tau$ disjoint $s$-$t$ paths with minimum weight in $D^{\boldsymbol{x}}_k +(u,v)$, where the weight of newly added arc $(u,v)$ is $-\sigma_N -1$.
  Note that $F^+_{\tau}$ can also be determined efficiently as a minimum cost flow problem \cite{Schr03}.
  Moreover, $N\cap (F^{+}_{\tau}\cup Q_{u-v}) = \arg\min_{R\in \mathcal{R}^{(u,v)}_{k}} \text{re}_{\boldsymbol{x}} (R)$ for $\tilde{\gamma}(R)=\tau$.
  Hence $\mathop{\arg\min}_{R\in \mathcal{R}^{(u,v)}_k} \text{re}_{\boldsymbol{x}} (R)$ can be computed by solving at most $\sigma_E$ minimum cost flow problems.
  It follows that $R_k^{\boldsymbol{x}}$ can be computed in polynomial time.
\end{proof}

\begin{lemma}
  For $\boldsymbol{x}\in \widetilde{X}_k$ and $(u,v)\in J\backslash J_k$, we have $\phi_{\boldsymbol{x}} (u) - \phi_{\boldsymbol{x}} (v) \geq \min \{\text{\emph{re}}_{\boldsymbol{x}} (O_k^{\boldsymbol{x}}),\text{\emph{re}}_{\boldsymbol{x}} (R_k^{\boldsymbol{x}})\}$.
\end{lemma}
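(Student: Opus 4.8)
The plan is to apply Lemma~\ref{thm:jump.uniqueness} to the $u$-$v$ jump $Q$ (recall $(u,v)\in J\setminus J_k$): either there is an $s$-$t$ path $P$ with $\mathscr{J}_P=\{Q\}$, in which case I will bound $\phi_{\boldsymbol{x}}(u)-\phi_{\boldsymbol{x}}(v)$ from below by $\text{re}_{\boldsymbol{x}}(R_k^{\boldsymbol{x}})$, or there is a cycle $C$ with $\mathscr{J}_C=\{Q\}$, in which case I will bound it from below by $\text{re}_{\boldsymbol{x}}(O_k^{\boldsymbol{x}})$. Throughout, $\boldsymbol{x}\in\widetilde{X}_k\subseteq\mathcal{C}(\widetilde{\Gamma}_D)$ by Lemma~\ref{thm:auxiliary.nucleon.lp.1}, so every relative excess is nonnegative, every jump has zero $x$-weight (Lemma~\ref{thm:core.allocation.non-veto}), Lemma~\ref{thm:potential.rep.path-cycle} is available, and $\phi_{\boldsymbol{x}}(u)-\phi_{\boldsymbol{x}}(v)\ge 0$ since $Q$ has length $0$ in $D^{\boldsymbol{x}}$. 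I will also use that each $P\in\mathscr{B}$ satisfies $x(N\cap P)=1$, which follows from $\sigma_N=x(N)=\sum_{P\in\mathscr{B}}x(N\cap P)\ge\sigma_N$ (as in the proof of Lemma~\ref{thm:potential.property}).

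In the path case, the $\mathscr{B}$-subpaths of $P$ lie in $D_k$ and the single jump $Q$ is restored in $D_k+(u,v)$, so $\{P\}$ is admissible for $\mathcal{F}^{+}_k(u,v)$ and hence $N\cap P\in\mathcal{R}^{(u,v)}_k$. Since $J_P=\{(u,v)\}$, Lemma~\ref{thm:potential.rep.path-cycle}\ref{itm:potential.rep.path} gives $x(N\cap P)=1+\big(\phi_{\boldsymbol{x}}(u)-\phi_{\boldsymbol{x}}(v)\big)$, and as $P$ itself certifies $\widetilde{\gamma}(N\cap P)\ge 1$ (Assumption~\ref{assumption:1} also yields $N\cap P\neq\emptyset$), we get $\text{re}_{\boldsymbol{x}}(N\cap P)=\frac{x(N\cap P)}{\widetilde{\gamma}(N\cap P)}-1\le x(N\cap P)-1=\phi_{\boldsymbol{x}}(u)-\phi_{\boldsymbol{x}}(v)$. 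Because $R_k^{\boldsymbol{x}}$ attains the least relative excess among the coalitions in $\bigcup_{(u',v')\in J\setminus J_k}\mathcal{R}^{(u',v')}_k$, this yields $\phi_{\boldsymbol{x}}(u)-\phi_{\boldsymbol{x}}(v)\ge\text{re}_{\boldsymbol{x}}(N\cap P)\ge\text{re}_{\boldsymbol{x}}(R_k^{\boldsymbol{x}})\ge\min\{\text{re}_{\boldsymbol{x}}(O_k^{\boldsymbol{x}}),\text{re}_{\boldsymbol{x}}(R_k^{\boldsymbol{x}})\}$.

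In the cycle case, $J_C=\{(u,v)\}$ and Lemma~\ref{thm:potential.rep.path-cycle}\ref{itm:potential.rep.cycle} gives $\phi_{\boldsymbol{x}}(u)-\phi_{\boldsymbol{x}}(v)=x(N\cap C)=\sum_{g\in N\cap C}x(g)$. This identity holds for every allocation in $\widetilde{X}_k$, so if all arcs of $N\cap C$ were fixed in $\widetilde{X}_k$ then $\phi_{\boldsymbol{x}'}(u)-\phi_{\boldsymbol{x}'}(v)$ would be constant over $\widetilde{X}_k$, contradicting $(u,v)\notin J_k$; hence some $e\in(N\cap C)\setminus N_k$ exists. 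Using $\sigma_N\ge 2$, choose $P'\in\mathscr{B}$ avoiding $e$ (at most one $\mathscr{B}$-path can contain $e$), so that $\{P'\}$ witnesses $O:=(N\cap P')\cup\{e\}\in\mathcal{O}^e_k$. Then $x(O)=x(N\cap P')+x(e)=1+x(e)$ and $P'$ certifies $\widetilde{\gamma}(O)\ge 1$, whence $\text{re}_{\boldsymbol{x}}(O)=\frac{1+x(e)}{\widetilde{\gamma}(O)}-1\le x(e)\le x(N\cap C)=\phi_{\boldsymbol{x}}(u)-\phi_{\boldsymbol{x}}(v)$; since $O_k^{\boldsymbol{x}}$ attains the least relative excess among the coalitions in $\bigcup_{e'\in N\setminus N_k}\mathcal{O}^{e'}_k$, we conclude $\phi_{\boldsymbol{x}}(u)-\phi_{\boldsymbol{x}}(v)\ge\text{re}_{\boldsymbol{x}}(O_k^{\boldsymbol{x}})\ge\min\{\text{re}_{\boldsymbol{x}}(O_k^{\boldsymbol{x}}),\text{re}_{\boldsymbol{x}}(R_k^{\boldsymbol{x}})\}$.

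I expect the cycle case to be the main obstacle. The key observations there are that $N\cap C$ must carry an unfixed private arc (otherwise $(u,v)$ should already have been fixed and the statement would be about a frozen quantity), and that a single $\mathscr{B}$-path — available precisely because $\sigma_N\ge 2$ — together with that arc already furnishes the element of $\mathcal{O}^e_k$ needed; after that the estimate collapses to the elementary inequality $\frac{1+t}{m}-1\le t$ for $t\ge 0$ and $m\ge 1$. The path case is essentially immediate from Lemma~\ref{thm:potential.rep.path-cycle}\ref{itm:potential.rep.path} once one checks that $P$ lives in $D_k+(u,v)$; the only bookkeeping subtlety is whether $D_k+(u,v)$ restores the whole jump $Q$ or just an artificial arc, which affects only the zero-weight private arcs of $Q$ and hence changes nothing in the excesses.
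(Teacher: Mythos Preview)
Your proof is correct and follows essentially the same approach as the paper: invoke Lemma~\ref{thm:jump.uniqueness}, in the path case exhibit $N\cap P\in\mathcal{R}^{(u,v)}_k$ and read off the relative excess via Lemma~\ref{thm:potential.rep.path-cycle}\ref{itm:potential.rep.path}, and in the cycle case locate an unfixed private arc $e$ on $C$, pair it with a $\mathscr{B}$-path avoiding $e$ (using $\sigma_N\ge 2$) to obtain an element of $\mathcal{O}^e_k$. Your handling is in fact slightly more careful than the paper's, since you only use $\widetilde{\gamma}(N\cap P)\ge 1$ and $\widetilde{\gamma}(O)\ge 1$ rather than equality, and you flag the bookkeeping issue about whether $D_k+(u,v)$ restores the jump $Q$ or just an artificial arc.
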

\begin{proof}
  Let $(u,v)\in J\backslash J_k$ and $Q$ be a $u$-$v$ jump.
  By Lemma \ref{thm:jump.uniqueness}, there exists either an $s$-$t$ path $P$ with $\mathscr{J}_P =\{Q\}$ or a cycle $C$ with $\mathscr{J}_C =\{Q\}$.

  For $s$-$t$ path $P$, Lemma \ref{thm:potential.rep.path-cycle} implies that $x(N\cap P)=1+\phi_{\boldsymbol{x}} (u) -\phi_{\boldsymbol{x}} (v)$.
  Hence $N\cap P\not\in \text{fix} (\widetilde{X}_k)$.
  Moreover, we have $N\cap P\in \mathcal{R}^{(u,v)}_k$, implying that $\text{re}_{\boldsymbol{x}} (N\cap P)= \phi_{\boldsymbol{x}} (u) -\phi_{\boldsymbol{x}} (v)\geq \text{re}_{\boldsymbol{x}} (R^{\boldsymbol{x}}_k)$.

  For cycle $C$, Lemma \ref{thm:potential.rep.path-cycle} implies that $x(N\cap C)=\phi_{\boldsymbol{x}} (u) -\phi_{\boldsymbol{x}} (v)$.
  Then $N\cap C\not\in \text{fix} (\widetilde{X}_k)$ suggests that $C$ contains an arc $e'\in N\backslash N_k$.
  Since $\sigma_N\geq 2$, there exists $P'\in \mathscr{B}$ avoiding $e'$.
  Moreover, Lemma \ref{thm:potential.rep.path-cycle} implies that $x\big( (N\cap P')\cup \{e'\}\big)=x(e')+1$.
  Note that $(N\cap P')\cup \{e'\}\in \mathcal{O}^{e'}_k$.
  It follows that $\phi_{\boldsymbol{x}} (u) -\phi_{\boldsymbol{x}} (v)=x(N\cap C)\geq x(e')=\text{re}_{\boldsymbol{x}} \big((N\cap P')\cup \{e'\}\big)\geq \text{re}_{\boldsymbol{x}} (O_k^{\boldsymbol{x}})$.
\end{proof}

\begin{lemma}\label{thm:unfixed.critical.coalition}
  For $\boldsymbol{x}\in \widetilde{X}_k$ and $S\in 2^N \backslash \text{\emph{fix}}(\widetilde{X}_k)$, we have $\text{\emph{re}}_{\boldsymbol{x}} (S) \geq \min \{\text{\emph{re}}_{\boldsymbol{x}} (O_k^{\boldsymbol{x}}),\text{\emph{re}}_{\boldsymbol{x}} (R_k^{\boldsymbol{x}})\}$.
\end{lemma}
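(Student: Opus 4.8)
The plan is to show that every unfixed coalition $S$ has relative excess at least as large as one of the two "critical" coalitions $O_k^{\boldsymbol{x}}$ or $R_k^{\boldsymbol{x}}$, by decomposing the surplus $x(S)-\widetilde\gamma(S)$ according to the path/cycle structure exposed in \eqref{eq:auxiliary.nucleon.lp.equiv.c1.alt} and reducing it to the two special cases already handled by the preceding two lemmas. Throughout we use that $\widetilde{X}_k\subseteq\mathcal{C}(\widetilde{\Gamma}_D)$ (Lemma \ref{thm:auxiliary.nucleon.lp.1}), so the potential machinery of Section \ref{sec:core.auxiliary} applies to $\boldsymbol{x}$, and in particular the identity
\begin{equation*}
  x(S)-\widetilde{\gamma}(S)= x(S\backslash P_S)+\sum_{P\in \mathcal{P}_S}\sum_{(u,v)\in J_P} \big(\phi_{\boldsymbol{x}}(u)-\phi_{\boldsymbol{x}}(v)\big)
\end{equation*}
holds, where $\mathcal{P}_S$ is a maximum set of disjoint $s$-$t$ paths in $D_S$ and $P_S$ is their arc set.

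First I would fix $\boldsymbol{x}\in\widetilde{X}_k$ and an unfixed $S$, and observe that since $S\notin\mathrm{fix}(\widetilde{X}_k)$, the right-hand side above is not constant on $\widetilde{X}_k$; hence at least one summand is non-constant, i.e.\ either there is a private arc $e\in (S\backslash P_S)\cap(N\backslash N_k)$, or there is a path $P\in\mathcal{P}_S$ and a jump pair $(u,v)\in J_P\backslash J_k$. In the first case, I would bound $\text{re}_{\boldsymbol{x}}(S)=\frac{x(S)-\widetilde\gamma(S)}{\widetilde\gamma(S)}$ from below by isolating the contribution of $e$: all other terms $x(e')$ with $e'\in S\backslash P_S$ are nonnegative, and each potential difference $\phi_{\boldsymbol{x}}(u')-\phi_{\boldsymbol{x}}(v')$ with $(u',v')\in J_{P'}$ is also nonnegative along a path from $\mathscr{B}$ decomposition when $\boldsymbol{x}$ is a core allocation (this is exactly the sign fact underlying \eqref{eq:core.potential} together with $\phi(s)=0,\phi(t)=1$; more carefully, each $P'\in\mathcal{P}_S$ contributes $x(N\cap P')-1\ge 0$ by Lemma \ref{thm:nucleon.lp.1.equiv.opt.double.descr}), so $x(S)\ge x(e)+\widetilde\gamma(S)-1+(\text{nonneg})$. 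Comparing with $(N\cap P^e)\cup\{e\}\in\mathcal{O}^e_k$ for a suitable $P^e\in\mathscr{B}$ avoiding $e$ — which exists since $\sigma_N\ge 2$ — yields $\text{re}_{\boldsymbol{x}}(S)\ge x(e)=\text{re}_{\boldsymbol{x}}\big((N\cap P^e)\cup\{e\}\big)\ge\text{re}_{\boldsymbol{x}}(O_k^{\boldsymbol{x}})$, using the $\arg\min$ definition of $O_k^{\boldsymbol{x}}$. In the second case, the previous lemma already gives $\phi_{\boldsymbol{x}}(u)-\phi_{\boldsymbol{x}}(v)\ge\min\{\text{re}_{\boldsymbol{x}}(O_k^{\boldsymbol{x}}),\text{re}_{\boldsymbol{x}}(R_k^{\boldsymbol{x}})\}$, and I would combine this with the nonnegativity of all the remaining terms in the decomposition of $x(S)-\widetilde\gamma(S)$ to conclude $x(S)-\widetilde\gamma(S)\ge\min\{\text{re}_{\boldsymbol{x}}(O_k^{\boldsymbol{x}}),\text{re}_{\boldsymbol{x}}(R_k^{\boldsymbol{x}})\}$; since $\widetilde\gamma(S)\ge 1$ (if $\widetilde\gamma(S)=0$ then $S\in\mathrm{fix}(\widetilde{X}_k)$ trivially), dividing gives the claim.

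The main obstacle I anticipate is the bookkeeping in the first case: making precise that all contributions to $x(S)-\widetilde\gamma(S)$ other than the chosen $x(e)$ are nonnegative, so that dropping them only weakens the inequality. This requires that the chosen maximum disjoint-path system $\mathcal{P}_S$ in $D_S$ can be taken so that $e\notin P_S$ (otherwise $e$ is not "available" as an element of $S\backslash P_S$) — but this is exactly the hypothesis defining the case — and that each path $P\in\mathcal{P}_S$ satisfies $x(N\cap P)\ge 1$, which is Lemma \ref{thm:nucleon.lp.1.equiv.opt.double.descr}. A secondary subtlety is checking that the coalition $O_k^{\boldsymbol{x}}$ (resp.\ $R_k^{\boldsymbol{x}}$) used for comparison indeed lies in the relevant family $\mathcal{O}^e_k$ (resp.\ $\mathcal{R}^{(u,v)}_k$) and is unfixed, so that it is a legitimate constraint in $\widetilde{LP}_{k+1}$; this follows from the membership discussion preceding the definition of $\widetilde{LP}_{k+1}^{\boldsymbol{x}}$ and from Lemma \ref{thm:potential.rep.path-cycle}. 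Once these are in place, the two cases together exhaust all unfixed $S$ and the lemma follows.
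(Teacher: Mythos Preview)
There is a genuine gap in both cases, and it is the same arithmetic slip: from $x(S)-\widetilde\gamma(S)\ge\xi$ with $\xi:=\min\{\text{re}_{\boldsymbol{x}}(O_k^{\boldsymbol{x}}),\text{re}_{\boldsymbol{x}}(R_k^{\boldsymbol{x}})\}\ge 0$ and $\widetilde\gamma(S)\ge 1$, you \emph{cannot} conclude $\text{re}_{\boldsymbol{x}}(S)=\frac{x(S)-\widetilde\gamma(S)}{\widetilde\gamma(S)}\ge\xi$; division by $\widetilde\gamma(S)\ge 1$ goes the wrong way when $\xi\ge 0$. Concretely, suppose $\widetilde\gamma(S)=3$, only one $P\in\mathcal{P}_S$ contains an unfixed jump pair, and the other two paths satisfy $x(N\cap P')=1$ exactly. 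Your decomposition yields only $x(S)-3\ge\xi$, whereas the lemma requires $x(S)-3\ge 3\xi$. The same issue bites in your Case~1: from $x(S)-\widetilde\gamma(S)\ge x(e)$ you obtain merely $\text{re}_{\boldsymbol{x}}(S)\ge x(e)/\widetilde\gamma(S)$, not $\text{re}_{\boldsymbol{x}}(S)\ge x(e)$.

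The missing idea is that one must compare $S$ to a critical coalition with the \emph{same} value $\widetilde\gamma(S)$, not with value $1$. In the paper's proof this is achieved in two ways. First, when all jump pairs of $\mathcal{P}_S$ are fixed (your Case~1 with $\mathcal{P}^1_S=\emptyset$), one uses $(S\cap P_S)\cup\{e\}\in\mathcal{O}^e_k$ rather than $(N\cap P^e)\cup\{e\}$: since $P_S\subseteq E_k$, this coalition lies in $\mathcal{O}^e_k$ and has $\widetilde\gamma=\widetilde\gamma(S)$, so $\text{re}_{\boldsymbol{x}}(S)\ge\text{re}_{\boldsymbol{x}}\big((S\cap P_S)\cup\{e\}\big)\ge\text{re}_{\boldsymbol{x}}(O_k^{\boldsymbol{x}})$ directly. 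Second, when some paths carry unfixed jump pairs, the paper inducts on $\lvert\mathcal{P}^1_S\rvert$, showing that each such path contributes at least $1+\xi$ to $x(S)$; the base case $\lvert\mathcal{P}^1_S\rvert=1$ with a single unfixed jump pair again places $S\cap P_S$ itself into $\mathcal{R}^{(u,v)}_k$ (value $\widetilde\gamma(S)$), while the subcase with $\ge 2$ unfixed jump pairs needs the cycle/path alternative of Lemma~\ref{thm:jump.uniqueness} and a rerouting argument. Your single application of the preceding lemma to one jump pair cannot substitute for this accumulation over $\widetilde\gamma(S)$ many paths.
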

\begin{proof}
  Let $\boldsymbol{x}\in \widetilde{X}_k$ and $S\in 2^N \backslash \text{fix}(\widetilde{X}_k)$.
  For simplicity, let $\xi=\min \{\text{re}_{\boldsymbol{x}} (O_k^{\boldsymbol{x}}),\text{re}_{\boldsymbol{x}} (R_k^{\boldsymbol{x}})\}$.
  Recall that $\mathcal{P}_S$ is the set of a maximum disjoint $s$-$t$ paths in $D_S$ and $P_S$ is the arc set of all paths in $\mathcal{P}_S$.
  Let $\mathcal{P}^1_{S}$ be the set of paths $P\in \mathcal{P}_S$ with $J_P\backslash J_k\neq \emptyset$ and $P^1_S$ be the arc set of all paths in $\mathcal{P}^1_{S}$.
  Let $\mathcal{P}^2_{S}= \mathcal{P}_{S}\backslash \mathcal{P}^1_{S}$ and $P^2_S = P_S\backslash P^1_S$.
  Notice that $P^2_S \subseteq E_k$.
  We may assume that $\lvert \mathcal{P}^1_{S} \rvert \geq 1$.
  Otherwise, $S\in 2^N \backslash \text{fix}(\widetilde{X}_k)$ implies that there exists $e\in (S\backslash P_S)\backslash N_k$.
  Note that $(S\cap P_S)\cup \{e\}\in \mathcal{O}^e_k$.
  Then $\text{re}_{\boldsymbol{x}} (S)\geq \text{re}_{\boldsymbol{x}} \big((S\cap P_S)\cup \{e\}\big) \geq \text{re}_{\boldsymbol{x}} (O^{\boldsymbol{x}}_k)\geq \xi$.

  We apply induction on $\lvert \mathcal{P}^1_{S} \rvert$ to show $\text{re}_{\boldsymbol{x}} (S)\geq \xi$.

  First assume that $\lvert \mathcal{P}^1_{S} \rvert=1$.
  Let $P\in \mathcal{P}^1_{S}$ and $(u,v)\in J_P \backslash J_k$ be the first unfixed jump pair along $P$.
  If $\lvert J_P\backslash J_k \rvert=1$, then $S\cap P_S\in \mathcal{R}^{(u,v)}_k$ implies that $\text{re}_{\boldsymbol{x}} (S) \geq \text{re}_{\boldsymbol{x}} (S\cap P_S) \geq \text{re}_{\boldsymbol{x}} (R^{\boldsymbol{x}}_k)\geq \xi$.
  Hence we further assume that $\lvert J_P\backslash J_k \rvert \geq 2$ and $(u',v')$ is another unfixed jump pair in $J_P \backslash J_k$.
  By Lemma \ref{thm:jump.uniqueness}, there exists either a cycle $W$ or an $s$-$t$ path $W$ such that $J_W=\{(u,v)\}$.
  Note that $J_W \cap J_k=\emptyset$.
  We proceed by distinguishing two cases of $W$.

  Case $1$: $W$ is a cycle.
  By Lemma \ref{thm:potential.rep.path-cycle}, $x(N\cap W)=\phi_{\boldsymbol{x}} (u) - \phi_{\boldsymbol{x}} (v) \leq \sum_{(i,j)\in J_P} \big(\phi_{\boldsymbol{x}} (i) - \phi_{\boldsymbol{x}} (j) \big) - \big(\phi_{\boldsymbol{x}} (u') - \phi_{\boldsymbol{x}} (v') \big) = x(N\cap P)-1-\big(\phi_{\boldsymbol{x}} (u') - \phi_{\boldsymbol{x}} (v') \big)$.
  Since $(u,v)\in J_P\backslash J_k$, there exists $e^*\in (N\backslash N_k) \cap W$.
  Moreover, $(N\cap P^2_S)\cup \{e^*\}\in \mathcal{O}^{e^*}_k$.
  It follows that
  \begin{equation*}
    \begin{split}
      x(S)
      & = x(N\cap P) + x(N\cap P^2_S)\\
      & \geq 1 + \big(\phi_{\boldsymbol{x}} (u') - \phi_{\boldsymbol{x}} (v') \big) + x(N\cap W) + x(N\cap P^2_S)\\
      & \geq 1 + \big(\phi_{\boldsymbol{x}} (u') - \phi_{\boldsymbol{x}} (v') \big) + x(e^*) + x(N\cap P^2_S)\\
      & \geq 1 + \xi + (1+\xi) \lvert \mathcal{P}^2_S\rvert \\
      & = (1+\xi)\widetilde{\gamma}(S).
    \end{split}
  \end{equation*}

  Case $2$: $W$ is an $s$-$t$ path.
  By Lemma \ref{thm:potential.rep.path-cycle}, $x(N\cap W)=1+\big( \phi_{\boldsymbol{x}} (u) - \phi_{\boldsymbol{x}} (v)\big) \leq 1+ \sum_{(i,j)\in J_P} \big(\phi_{\boldsymbol{x}} (i) - \phi_{\boldsymbol{x}} (j) \big) - \big(\phi_{\boldsymbol{x}} (u') - \phi_{\boldsymbol{x}} (v') \big) = x(N\cap P)-\big(\phi_{\boldsymbol{x}} (u') - \phi_{\boldsymbol{x}} (v') \big)$.
  Let $W_{v-t}$ be the $v$-$t$ subpath of $W$.
  Let $w$ be the first common vertex of $W_{v-t}$ and a path $P'$ from $\mathcal{P}^2_S$.
  Note that $w$ could be $t$.
  Then $P_{s-u} * W_{u-w} * P'_{w-t}$ yields an $s$-$t$ path, denoted by $P^*$.
  Moreover, $\big( P^* \cup (P^2_S \backslash P') \big) \cap N\in \mathcal{R}^{(u,v)}_k$.
  It follows that
  \begin{equation*}
    \begin{split}
      x(S)
      & = x(N\cap P) + x(N\cap P') + x\big(N\cap (P^2_S \backslash P') \big)\\
      & = 1 + \sum_{(i,j)\in J_P} \big(\phi(i)-\phi(j)\big) + 1 + \sum_{(i,j)\in J_{P'}} \big(\phi(i)-\phi(j)\big) + x\big(N\cap (P^2_S \backslash P') \big)\\
      & \geq 1 + \sum_{(i,j)\in J_{P^*}} \big(\phi(i)-\phi(j)\big) + \big(\phi_{\boldsymbol{x}} (u') - \phi_{\boldsymbol{x}} (v') \big) + x\big(N\cap (P^2_S \backslash P') \big)  + 1\\
      & \geq x(N\cap P^*) +  \big(\phi_{\boldsymbol{x}} (u') - \phi_{\boldsymbol{x}} (v') \big) + x\big(N\cap (P^2_S \backslash P') \big)  + 1\\
      & \geq (1+\xi) \lvert \mathcal{P}^2_S \rvert + \xi +1 \\
      & = (1+\xi)\widetilde{\gamma}(S).
    \end{split}
  \end{equation*}

  Now assume that $\lvert \mathcal{P}^1_{S}\rvert=k+1$, where $k\geq 1$.
  Let $P\in \mathcal{P}^1_{S}$.
  By induction hypothesis, $\text{re}_{\boldsymbol{x}} \big(N\cap (P_S\backslash P) \big)\geq \xi$.
  It follows that
  \begin{equation*}
    \begin{split}
      x(S)
      & = x(N\cap P)+ x\big(N\cap (P_S\backslash P)\big) \\
      & \geq 1+\xi + (1+\xi) \big(\widetilde{\gamma}(S)-1\big)\\
      & = (1+\xi) \widetilde{\gamma}(S).
    \end{split}
  \end{equation*}

  Hence for $\boldsymbol{x}\in \widetilde{X}_k$ and $S\in 2^N \backslash \text{fix}(\widetilde{X}_k)$, we have $\text{re}_{\boldsymbol{x}} (S) \geq \min \{\text{re}_{\boldsymbol{x}} (O_k^{\boldsymbol{x}}),\text{re}_{\boldsymbol{x}} (R_k^{\boldsymbol{x}})\}$.
\end{proof}
 
  For any $(\boldsymbol{x},\epsilon)$ with $\boldsymbol{x}\in \widetilde{X}_k$, Lemma \ref{thm:unfixed.critical.coalition} implies that constraints \eqref{eq:separation.oracle.opt.k+1.c1} and \eqref{eq:separation.oracle.opt.k+1.c2} hold if and only if constraints \eqref{eq:auxiliary.nucleon.lp.c1} hold.
  Hence $\widetilde{LP}_{k+1}$ can be solved in polynomial time.
  Moreover, for $\boldsymbol{x}\in \widetilde{X}_k$, setting $\epsilon$ to $\epsilon_{k+1}$ in $\widetilde{LP}^{\boldsymbol{x}}_{k+1}$ yields a polynomial-time solvable linear program for the separation problem of $\widetilde{X}_{k+1}$.
 
\begin{lemma}
The separation problem of $\widetilde{X}_{k+1}$ can be solved in polynomial time. 
\end{lemma}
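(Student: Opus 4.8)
The plan is to reduce the separation problem for $\widetilde{X}_{k+1}$ to two already-available ingredients: the separation oracle for $\widetilde{X}_k$ (the induction hypothesis) and the polynomial-time identification of the ``most violated'' unfixed coalition furnished by Lemma~\ref{thm:unfixed.critical.coalition} together with the construction of $O_k^{\boldsymbol{x}}$ and $R_k^{\boldsymbol{x}}$.

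First I would produce the threshold $\epsilon_{k+1}$. For a candidate pair $(\boldsymbol{x},\epsilon)$ with $\boldsymbol{x}\in\widetilde{X}_k$, the discussion preceding this lemma shows, via Lemma~\ref{thm:unfixed.critical.coalition}, that the pair of inequalities \eqref{eq:separation.oracle.opt.k+1.c1}--\eqref{eq:separation.oracle.opt.k+1.c2} defining $\widetilde{LP}^{\boldsymbol{x}}_{k+1}$ is equivalent to the whole family \eqref{eq:auxiliary.nucleon.lp.c1}; combined with the inductive separation oracle for the constraint $\boldsymbol{x}\in\widetilde{X}_k$, this yields a polynomial-time separation oracle for the feasible region of $\widetilde{LP}_{k+1}$ in $(\boldsymbol{x},\epsilon)$-space. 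By the equivalence of separation and optimization, $\widetilde{LP}_{k+1}$ is solvable in polynomial time, so $\epsilon_{k+1}$ is obtained as a rational number of polynomially bounded encoding length.

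Next I would use the optimality of $\epsilon_{k+1}$ to rewrite the target set explicitly: since $\epsilon_{k+1}$ is the optimum of $\widetilde{LP}_{k+1}$, a vector $\boldsymbol{x}$ lies in $\widetilde{X}_{k+1}$ if and only if $(\boldsymbol{x},\epsilon_{k+1})$ is feasible for $\widetilde{LP}_{k+1}$, i.e. if and only if $\boldsymbol{x}\in\widetilde{X}_k$ and $x(S)\ge(1+\epsilon_{k+1})\widetilde{\gamma}(S)$ for every $S\in 2^N\setminus\textnormal{fix}(\widetilde{X}_k)$. Given an input $\boldsymbol{x}$, the algorithm first runs the inductive oracle for $\widetilde{X}_k$; if it returns a violated inequality we are done. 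Otherwise $\boldsymbol{x}\in\widetilde{X}_k$, so Lemma~\ref{thm:unfixed.critical.coalition} applies and the minimum of $\text{re}_{\boldsymbol{x}}(S)$ over unfixed $S$ equals $\min\{\text{re}_{\boldsymbol{x}}(O_k^{\boldsymbol{x}}),\text{re}_{\boldsymbol{x}}(R_k^{\boldsymbol{x}})\}$; we compute $O_k^{\boldsymbol{x}}$ and $R_k^{\boldsymbol{x}}$ in polynomial time and check whether both relative excesses are at least $\epsilon_{k+1}$. If so, $\boldsymbol{x}\in\widetilde{X}_{k+1}$; if not, whichever of $O_k^{\boldsymbol{x}},R_k^{\boldsymbol{x}}$ fails yields a hyperplane separating $\boldsymbol{x}$ from $\widetilde{X}_{k+1}$.

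The delicate point I expect is the first step: one must argue not only that $\epsilon_{k+1}$ is computable but that it is a well-conditioned rational of polynomial size, so that the exact comparisons $\text{re}_{\boldsymbol{x}}(O_k^{\boldsymbol{x}})\ge\epsilon_{k+1}$ are algorithmically meaningful. This is where the standard equivalence of separation and optimization must be invoked carefully, using that $\widetilde{X}_k$ is a rational polytope of polynomially bounded facet complexity, being carved out of $\chi(\widetilde{\Gamma}_D)$ by the earlier rounds. The rest is routine bookkeeping over the $O(|V|^2)$ jump pairs and the private arcs.
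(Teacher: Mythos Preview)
Your proposal is correct and follows essentially the same route as the paper: use Lemma~\ref{thm:unfixed.critical.coalition} to replace the exponentially many constraints \eqref{eq:auxiliary.nucleon.lp.c1} by the two constraints on $O_k^{\boldsymbol{x}}$ and $R_k^{\boldsymbol{x}}$, combine this with the inductive separation oracle for $\widetilde{X}_k$ to solve $\widetilde{LP}_{k+1}$, and then fix $\epsilon=\epsilon_{k+1}$ in $\widetilde{LP}^{\boldsymbol{x}}_{k+1}$ to separate over $\widetilde{X}_{k+1}$. Your write-up is in fact more explicit than the paper's, which compresses the argument into a single sentence and does not address the bit-complexity issue you flag.
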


Since it takes at most $\lvert N\rvert$ iterations before $\widetilde{X}_{k+1}$ converges to the nucleon, we conclude that the nucleon of $\widetilde{\Gamma}_D$ can be computed in polynomial time.

\begin{lemma}\label{thm:nucleon.computation.polynomial}
$\eta (\widetilde{\Gamma}_D)$ can be computed in polynomial time.
\end{lemma}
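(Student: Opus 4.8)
The plan is to prove the statement by induction on the round index $k$ of the recursive scheme \eqref{eq:auxiliary.nucleon.lp}, showing that for every $k$ the linear program $\widetilde{LP}_{k+1}$ can be solved in polynomial time and that the resulting optimal face $\widetilde{X}_{k+1}$ admits a polynomial-time separation oracle. Since, as recalled from Faigle et al.~\cite{FKFH98}, the dimension of $\widetilde{X}_{k+1}$ strictly decreases each round and the recursion terminates with $\widetilde{X}_{k+1}=\eta(\widetilde{\Gamma}_D)$ once $\widetilde{X}_{k+1}$ is a singleton or $\widetilde{\gamma}(S)=0$ for all $S\notin\text{fix}(\widetilde{X}_{k+1})$, at most $|N|$ rounds are needed; composing them then yields the claimed polynomial-time algorithm. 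For the base case I would invoke Lemma~\ref{thm:auxiliary.nucleon.lp.1}, which gives $\epsilon_1=0$ and $\widetilde{X}_1=\mathcal{C}(\widetilde{\Gamma}_D)$, together with Lemma~\ref{thm:separation.oracle.opt.1}, which supplies a polynomial-time separation oracle for $\widetilde{X}_1$.

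For the inductive step I assume $k\geq 1$ and that the separation problem of $\widetilde{X}_k$ is polynomial-time solvable. First I would use Lemma~\ref{thm:fixed.poly.solvable} to compute the fixed private arcs $N_k$ and fixed jump pairs $J_k$, and build the auxiliary network $D_k$. The program $\widetilde{LP}_{k+1}$ has one constraint per unfixed coalition, hence exponentially many, so I would solve it with the ellipsoid method via the equivalence of separation and optimization \cite{Schr03}, supplying the following oracle. Given $(\boldsymbol{x},\epsilon)$, first test $\boldsymbol{x}\in\widetilde{X}_k$ by the inductive hypothesis; this also certifies $\boldsymbol{x}\in\mathcal{C}(\widetilde{\Gamma}_D)$, so the potential $\boldsymbol{\phi}_{\boldsymbol{x}}$ is well-defined and the decomposition identities of Lemma~\ref{thm:potential.rep.path-cycle} apply. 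If $\boldsymbol{x}\in\widetilde{X}_k$, compute the critical coalitions $O_k^{\boldsymbol{x}}$ and $R_k^{\boldsymbol{x}}$ via $O(\sigma_E)$ minimum-cost flow computations on the weighted graphs $D_k-e$ and $D_k+(u,v)$ exactly as in the argument preceding the statement, and then check only constraints \eqref{eq:separation.oracle.opt.k+1.c1}--\eqref{eq:separation.oracle.opt.k+1.c2}. By Lemma~\ref{thm:unfixed.critical.coalition}, $\text{re}_{\boldsymbol{x}}(S)\geq\min\{\text{re}_{\boldsymbol{x}}(O_k^{\boldsymbol{x}}),\text{re}_{\boldsymbol{x}}(R_k^{\boldsymbol{x}})\}$ for every $S\in 2^N\backslash\text{fix}(\widetilde{X}_k)$, so no constraint of $\widetilde{LP}_{k+1}$ can be violated unless one of these two is, and the oracle is both correct and polynomial.

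Because the feasible region lies in the bounded rational polytope $\chi(\widetilde{\Gamma}_D)\subseteq\{\boldsymbol{x}\in\mathbb{R}^N_{\geq 0}:x(N)=\sigma_N\}$, whose encoding length is polynomial in the size of $D$ since all quantities arise from unit-capacity flow computations, the equivalence of separation and optimization \cite{Schr03} returns the exact optimum $\epsilon_{k+1}$, an explicit linear description of the optimal face $\widetilde{X}_{k+1}=\{\boldsymbol{x}\in\widetilde{X}_k: x(S)\geq(1+\epsilon_{k+1})\widetilde{\gamma}(S)\ \forall S\notin\text{fix}(\widetilde{X}_k)\}$, and --- by fixing $\epsilon=\epsilon_{k+1}$ in $\widetilde{LP}^{\boldsymbol{x}}_{k+1}$ and conjoining membership in $\widetilde{X}_k$ --- a polynomial-time separation oracle for $\widetilde{X}_{k+1}$, closing the induction. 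Termination is detectable in polynomial time from these oracles: $\widetilde{X}_{k+1}$ is a singleton iff $\max\{x(e):\boldsymbol{x}\in\widetilde{X}_{k+1}\}=\min\{x(e):\boldsymbol{x}\in\widetilde{X}_{k+1}\}$ for every $e\in N$, and $\widetilde{\gamma}(S)=0$ for all unfixed $S$ iff $\widetilde{\gamma}(O_k^{\boldsymbol{x}})=\widetilde{\gamma}(R_k^{\boldsymbol{x}})=0$.

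The hard part is precisely the separation oracle of the inductive step, and it has essentially been dispatched by the earlier lemmas. The obstacle is that the relative excesses $\text{re}_{\boldsymbol{x}}(S)$ depend on the vertex potentials $\boldsymbol{\phi}_{\boldsymbol{x}}$, which are shortest-path lengths and hence only piecewise-linear --- not linear --- in $\boldsymbol{x}$, so a priori the unfixed coalition minimizing $\text{re}_{\boldsymbol{x}}$ need not be efficiently identifiable, nor is the optimal face of $\widetilde{LP}_{k+1}$ obviously polyhedral in a controlled way. The path/cycle decomposition through jumps (Lemmas~\ref{thm:jump.composition}--\ref{thm:potential.rep.path-cycle}) is what resolves this: on $\widetilde{X}_k\subseteq\mathcal{C}(\widetilde{\Gamma}_D)$ it rewrites $x(S)-\widetilde{\gamma}(S)$ as a sum of potential differences over jump pairs, reducing the search for the worst unfixed coalition to the two flow-computable candidates $O_k^{\boldsymbol{x}}$ and $R_k^{\boldsymbol{x}}$ of Lemma~\ref{thm:unfixed.critical.coalition}. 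What remains is routine bookkeeping: verifying the standard hypotheses of the ellipsoid method (boundedness, bit-complexity, and working in the non-full-dimensional affine hull $x(N)=\sigma_N$) and carrying the data $N_k$, $J_k$ and the description of $\widetilde{X}_k$ from one round to the next.
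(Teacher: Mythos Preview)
Your proposal is correct and follows essentially the same approach as the paper: an induction on the round index $k$ with base case handled by Lemmas~\ref{thm:auxiliary.nucleon.lp.1} and~\ref{thm:separation.oracle.opt.1}, inductive step driven by Lemma~\ref{thm:fixed.poly.solvable}, the min-cost-flow computation of $O_k^{\boldsymbol{x}}$ and $R_k^{\boldsymbol{x}}$, and Lemma~\ref{thm:unfixed.critical.coalition}, all wrapped in the equivalence of separation and optimization. The extra care you take with ellipsoid-method hypotheses and termination checks is not spelled out in the paper but is consistent with its argument.
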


\section{The approximate core and nucleon of $\Gamma_D$}
\label{sec:main.results}

Now we are ready to present our main results on $\Gamma_D$.
Theorem \ref{thm:approx.core} follows from Lemmas \ref{thm:multiplication.factor}, \ref{thm:nucleon.lp.1.equiv.opt.double.descr} and \ref{thm:core.v.rep}, which generalizes the result of Reijnierse et al. \cite{RMPT96}.

\begin{theorem}\label{thm:approx.core}
  Let $\Gamma_D=(N,\gamma)$ be a flow game defined on $D=(V,E;s,t)$ with player set $N\subseteq E$.
  Let $\epsilon^*=\max\{\epsilon: \mathcal{C}_{\epsilon}(\Gamma_D)\neq \emptyset\}$.
  Then $\epsilon^* = \frac{\sigma_E}{\sigma_N}-1$.
 Moreover, we have following characterizations for $\mathcal{C}_{\epsilon^*} (\Gamma_D)$.
  \begin{enumerate}[label*=\roman*\emph{)}]
    \item $\frac{\sigma_N}{\sigma_E} \mathcal{C}_{\epsilon^*} (\Gamma_D)= \{ \boldsymbol{x}\in \mathbb{R}^N_{\geq 0}: x(N)=\sigma_N; x(N\cap P)\geq 1, \forall P\in \mathscr{P}\}$.
    \item $\frac{\sigma_N}{\sigma_E} \mathcal{C}_{\epsilon^*} (\Gamma_D)$ is the convex hull of incidence vectors for minimum $s$-$t$ cuts constrained to $N$.
  \end{enumerate}
\end{theorem}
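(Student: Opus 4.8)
The plan is to derive Theorem \ref{thm:approx.core} entirely from the analysis of the auxiliary game carried out in Section \ref{sec:core.auxiliary}, using Lemma \ref{thm:multiplication.factor} as the bridge. Observe that $\gamma(N)$ equals the maximum flow value of $D=D_N$, which is $\sigma_E$, whereas $\widetilde{\gamma}(N)=\sigma_N$, and the two games agree on every proper coalition; thus $\widetilde{\Gamma}_D$ is obtained from $\Gamma_D$ by scaling the grand-coalition value by $\alpha=\sigma_N/\sigma_E\le 1$. Since the core of $\widetilde{\Gamma}_D$ is nonempty with optimal parameter $0$ (Lemmas \ref{thm:nucleon.lp.1.equiv.opt.double.descr} and \ref{thm:auxiliary.nucleon.lp.1}), Lemma \ref{thm:multiplication.factor} predicts that $\mathcal{C}_{\epsilon^*}(\Gamma_D)=\frac{\sigma_E}{\sigma_N}\mathcal{C}(\widetilde{\Gamma}_D)$ with $\epsilon^*=\sigma_E/\sigma_N-1$; I would, however, prove this directly rather than invoke the general statement.

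First I would establish the identity $\frac{\sigma_N}{\sigma_E}\mathcal{C}_{\epsilon^*}(\Gamma_D)=\mathcal{C}(\widetilde{\Gamma}_D)$ together with the value of $\epsilon^*$. For the inclusion ``$\supseteq$'' (and simultaneously the bound $\epsilon^*\ge\sigma_E/\sigma_N-1$): given $\boldsymbol{y}\in\mathcal{C}(\widetilde{\Gamma}_D)$, set $\boldsymbol{x}=\frac{\sigma_E}{\sigma_N}\boldsymbol{y}$; then $x(N)=\sigma_E=\gamma(N)$, the grand-coalition requirement $x(N)\ge(1+\epsilon)\gamma(N)$ holds because $1+\epsilon=\sigma_E/\sigma_N\le 1$, and for $S\subsetneq N$ we get $x(S)=\frac{\sigma_E}{\sigma_N}y(S)\ge\frac{\sigma_E}{\sigma_N}\widetilde{\gamma}(S)=(1+\epsilon^*)\gamma(S)$, so $\boldsymbol{x}\in\mathcal{C}_{\epsilon^*}(\Gamma_D)$. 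For the bound $\epsilon^*\le\sigma_E/\sigma_N-1$: take any $\boldsymbol{x}\in\mathcal{C}_\epsilon(\Gamma_D)$ and apply its constraint to the coalition $N\cap P$ for each $P$ in the maximum $N$-disjoint family $\mathscr{B}$; since the whole path $P$ lies in the network $D_{N\cap P}$ induced by its own private arcs together with the public arcs, we have $\gamma(N\cap P)\ge 1$ and hence $x(N\cap P)\ge 1+\epsilon$, and summing over the $\sigma_N$ pairwise $N$-disjoint paths of $\mathscr{B}$ gives $\sigma_E=x(N)\ge\sigma_N(1+\epsilon)$. Together these show $\epsilon^*=\sigma_E/\sigma_N-1$. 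For the reverse inclusion ``$\subseteq$'': given $\boldsymbol{x}\in\mathcal{C}_{\epsilon^*}(\Gamma_D)$, put $\boldsymbol{y}=\frac{\sigma_N}{\sigma_E}\boldsymbol{x}$; then $y(N)=\sigma_N=\widetilde{\gamma}(N)$ and, for $S\subsetneq N$, $y(S)=\frac{\sigma_N}{\sigma_E}x(S)\ge\frac{\sigma_N}{\sigma_E}(1+\epsilon^*)\gamma(S)=\gamma(S)=\widetilde{\gamma}(S)$, so $\boldsymbol{y}\in\mathcal{C}(\widetilde{\Gamma}_D)$ and $\boldsymbol{x}\in\frac{\sigma_E}{\sigma_N}\mathcal{C}(\widetilde{\Gamma}_D)$.

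With the identity in hand, the two characterizations are immediate. Since $\chi(\widetilde{\Gamma}_D)=\{\boldsymbol{x}\in\mathbb{R}^N_{\ge 0}:x(N)=\sigma_N\}$, substituting the description of $\mathcal{C}(\widetilde{\Gamma}_D)$ from Lemma \ref{thm:nucleon.lp.1.equiv.opt.double.descr} yields part i); part ii) is exactly Lemma \ref{thm:core.v.rep} applied to $\frac{\sigma_N}{\sigma_E}\mathcal{C}_{\epsilon^*}(\Gamma_D)=\mathcal{C}(\widetilde{\Gamma}_D)$, noting that every minimum $s$-$t$ cut constrained to $N$ has size $\sigma_N$ by Lemma \ref{thm:partial.disjoint.min-max}, so its incidence vector indeed lies in $\chi(\widetilde{\Gamma}_D)$.

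Because the entire argument is a reduction to results already proved, there is no real obstacle; the only points deserving care are bookkeeping ones — checking the direction of the grand-coalition inequality, which works precisely because $\sigma_E\le\sigma_N$, and noting the trivial fact that each path of $\mathscr{B}$ can carry one unit of flow in the subnetwork induced by its private arcs, so that $\gamma(N\cap P)\ge 1$.
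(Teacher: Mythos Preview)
Your proposal is correct and follows essentially the same route as the paper, which simply states that Theorem~\ref{thm:approx.core} follows from Lemmas~\ref{thm:multiplication.factor}, \ref{thm:nucleon.lp.1.equiv.opt.double.descr} and \ref{thm:core.v.rep}. The only difference is that you unpack Lemma~\ref{thm:multiplication.factor} explicitly (verifying both inclusions and both bounds on $\epsilon^*$ by hand) rather than invoking it as a black box; this makes the scaling step more transparent but is not a genuinely different argument.
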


\begin{theorem}\label{thm:nulceon}
  Let $\Gamma_D=(N,\gamma)$ be a flow game defined on $D=(V,E;s,t)$ with player set $N\subseteq E$.
  Then $\eta (\Gamma_D)$ can be computed in polynomial time.
  Moreover, $\eta (\Gamma_D)=\mathcal{C}(\Gamma_D)$ if $\sigma_N = 1$ and $\eta (\Gamma_D)$ is a singleton otherwise.
\end{theorem}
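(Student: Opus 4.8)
The plan is to route everything through the auxiliary game $\widetilde{\Gamma}_D$ via Lemma~\ref{thm:multiplication.factor}, splitting on $\sigma_N=1$ versus $\sigma_N\ge 2$; recall the scaling relation recorded after Lemma~\ref{thm:multiplication.factor}, that $\boldsymbol{x}\in\eta(\widetilde{\Gamma}_D)$ iff $\tfrac{\sigma_E}{\sigma_N}\boldsymbol{x}\in\eta(\Gamma_D)$.

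\emph{The regime $\sigma_N=1$.} Since $\sigma_N\ge\sigma_E\ge 1$ one has $\sigma_E=1$, so $\tilde\gamma=\gamma$, $\widetilde{\Gamma}_D=\Gamma_D$, and $\epsilon^*=\tfrac{\sigma_E}{\sigma_N}-1=0$ by Theorem~\ref{thm:approx.core}. By Lemma~\ref{thm:partial.disjoint.min-max} a minimum $s$-$t$ cut constrained to $N$ has size $1$, so $N^*:=\{e\in N:\{e\}\text{ is an }s\text{-}t\text{ cut of }D\}$ is nonempty and every $s$-$t$ path of $D$ contains every arc of $N^*$. First I would identify
\begin{equation*}
  \mathcal{C}(\Gamma_D)=\bigl\{\boldsymbol{x}\in\mathbb{R}^{N}_{\ge 0}:\ x(N^*)=1,\ \ x(e)=0\ \ \forall e\in N\backslash N^*\bigr\},
\end{equation*}
using Lemma~\ref{thm:nucleon.lp.1.equiv.opt.double.descr} for the inclusion ``$\supseteq$'' and Lemma~\ref{thm:core.allocation.non-veto} for ``$\subseteq$'' (every $e\notin N^*$ is avoided by some $s$-$t$ path, which by itself is a maximum set of $s$-$t$ paths disjoint on $N$ avoiding $e$). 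This set is a simplex, computable in polynomial time once $N^*$ is found by $|N|$ reachability tests. Now run the recursion~\eqref{eq:nucleon.lp}: one gets $X_1=\mathcal{C}(\Gamma_D)$, and every $S\notin\text{fix}(X_1)$ satisfies $\emptyset\ne S\cap N^*\subsetneq N^*$, hence $N^*\not\subseteq S$ and $\gamma(S)=0$; so the recursion halts immediately, giving $\eta(\Gamma_D)=X_1=\mathcal{C}(\Gamma_D)$.

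\emph{The regime $\sigma_N\ge 2$.} By the scaling relation and Lemma~\ref{thm:nucleon.computation.polynomial}, $\eta(\Gamma_D)=\tfrac{\sigma_E}{\sigma_N}\eta(\widetilde{\Gamma}_D)$ is computable in polynomial time, so it remains to show $\eta(\widetilde{\Gamma}_D)$ is a singleton. Let $\widetilde{X}_K=\eta(\widetilde{\Gamma}_D)$ be the terminal set of~\eqref{eq:auxiliary.nucleon.lp} and suppose it is not a singleton; then some private arc $e$ is unfixed, i.e. $e\notin N_K$, so $e\notin N_1$ as $N_1\subseteq N_K$. I would first record the rigidity $x(N\cap B)=1$ for every base path $B\in\mathscr{B}$ and every $\boldsymbol{x}\in\mathcal{C}(\widetilde{\Gamma}_D)$: by Lemma~\ref{thm:nucleon.lp.1.equiv.opt.double.descr} and disjointness of the base paths on $N$, $\sigma_N=x(N)\ge\sum_{B\in\mathscr{B}}x(N\cap B)\ge|\mathscr{B}|=\sigma_N$, forcing equality. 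Consequently no base path has $\{e\}$ as its set of private arcs (else $e\in N_1$). Since $\sigma_N\ge 2$ and the base paths are disjoint on $N$, pick $B\in\mathscr{B}$ with $e\notin B$; a further base path $B'\ne B$ carries a private arc by Assumption~\ref{assumption:1}, and since its private-arc set is not $\{e\}$ it carries a private arc $f\ne e$, while $f\notin N\cap B$ by disjointness. Hence $S:=(N\cap B)\cup\{e\}\subsetneq N$; moreover $B$ is an $s$-$t$ path of $D_S$, so $\tilde\gamma(S)=\gamma(S)\ge 1$, and $x(S)=x(N\cap B)+x(e)=1+x(e)$ is nonconstant on $\widetilde{X}_K\subseteq\mathcal{C}(\widetilde{\Gamma}_D)$ because $e\notin N_K$, so $S\notin\text{fix}(\widetilde{X}_K)$. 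This contradicts the termination rule for~\eqref{eq:auxiliary.nucleon.lp}, which---$\widetilde{X}_K$ being assumed non-singleton---requires $\tilde\gamma(S)=0$ for all $S\notin\text{fix}(\widetilde{X}_K)$. Therefore $\eta(\widetilde{\Gamma}_D)$, and with it $\eta(\Gamma_D)$, is a singleton.

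\emph{Where the difficulty lies.} The polynomial-time claim is essentially Lemma~\ref{thm:nucleon.computation.polynomial} transported through the scaling lemma, so the real content is the dichotomy ``$\eta(\Gamma_D)$ equals the core, or is a single point''. The delicate step is the $\sigma_N\ge 2$ case: from a non-singleton terminal set $\widetilde{X}_K$ one must manufacture an \emph{unfixed} coalition of \emph{positive} worth, and this is exactly where the rigidity $x(N\cap B)=1$ on every base path (already seen in the proof of Lemma~\ref{thm:potential.property}) together with the small amount of bookkeeping needed to keep $S$ a proper subcoalition of $N$ do the work.
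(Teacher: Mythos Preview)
Your argument is correct and follows the paper's strategy: both split on $\sigma_N$, and for $\sigma_N\ge 2$ both pin down each arc $e$ via the coalition $(N\cap P)\cup\{e\}$ for an $s$-$t$ path $P$ avoiding $e$. The paper's execution is a bit leaner---it notes directly that the nucleon fixes every coalition of positive worth, so $(N\cap P)\cup\{e\}$ and $N\cap P$ are both fixed and hence so is $e$, which sidesteps your base-path rigidity $x(N\cap B)=1$ and the extra bookkeeping to keep $S\subsetneq N$; likewise for $\sigma_N=1$ it just observes that every core allocation has the same relative-excess vector (value $0$ on worth-$1$ coalitions, $+\infty$ elsewhere), which immediately gives $\eta(\Gamma_D)=\mathcal{C}(\Gamma_D)$ without your explicit simplex description---but the substance is the same.
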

\begin{proof}
  By Lemma \ref{thm:multiplication.factor}, we have $\eta (\Gamma_D)=\frac{\sigma_E}{\sigma_N}\eta (\widetilde{\Gamma}_D)$.
  Furthermore, Lemma \ref{thm:nucleon.computation.polynomial} establishes that $\eta (\Gamma_D)$ can be computed in polynomial time.

  First assume that $\sigma_N =1$.
  Clearly, $\gamma (S)\in \{0,1\}$ for $S\subseteq N$.
  Moreover, Reijnierse et al. \cite{RMPT96} showed that $\mathcal{C}(\Gamma_D)\neq \emptyset$ when $\sigma_N =1$. 
  Notice that for any $\boldsymbol{x}\in \mathcal{C}(\Gamma_D)$ and $S\subseteq N$, $\text{re}_{\boldsymbol{x}}(S)=0$ if $\gamma(S)=1$ and $\text{re}_{\boldsymbol{x}}(S)=+\infty$ otherwise.
  It follows that $\eta (\Gamma_D)=\mathcal{C}(\Gamma_D)$ in this case.

  Now assume that $\sigma_N \geq 2$.
  For any $e\in N$, there exists an $s$-$t$ path $P$ avoiding $e$.
  Hence $\gamma \big((N \cap P)\cup \{e\}\big) \geq \gamma(N \cap P)=1>0$.
  Notice that $\eta (\Gamma_D)$ fixes every coalition $S\subseteq N$ with $\gamma(S)>0$.
  Thus, both $(N \cap P)\cup \{e\}$ and $N\cap P$ are fixed in $\eta (\Gamma_D)$, implying that $e$ is also fixed in $\eta (\Gamma_D)$.
  It follows that every player is fixed in $\eta (\Gamma_D)$.
  Therefore, $\eta (\Gamma_D)$ is a singleton in this case.
\end{proof}

%\section{Discussions}
%\label{sec:discussion}

%\section*{Acknowledgments}

\bibliographystyle{habbrv}
\bibliography{reference}

%\appendix
%\appendixpage
%\addappheadtotoc

%\section{Special case}

\end{document}